\definecolor{light-gray}{gray}{0.95}
\newcommand{\eps}{\varepsilon}
\newcommand{\etal}{\emph{et al.}\xspace}
\theoremstyle{plain}
\newcommand{\C}{\ensuremath{\mathcal{C}}}
\newcommand{\D}{\ensuremath{\mathcal{D}}}
\newcommand{\G}{\ensuremath{\mathcal{G}}}
\newcommand{\T}{\ensuremath{\mathcal{T}}}
\newcommand{\U}{\ensuremath{\mathcal{U}}}
\newcommand{\REAL}{\ensuremath{\mathbb{R}}}
\newcommand{\Reals}{\REAL}
\renewcommand{\leq}{\leqslant}
\renewcommand{\geq}{\geqslant}
\newcommand{\bd}{\partial}
\DeclarePairedDelimiter\ceil{\lceil}{\rceil}
\DeclareMathOperator{\polylog}{polylog}
\DeclareMathOperator{\radius}{radius}
\newcommand{\myroot}{\mathit{root}}
\newcommand{\graph}{\G}
\newcommand{\tg}{\graph_{\mathrm{tr}}}
\newcommand{\ig}{\graph^{\times}}
\newcommand{\sep}{\mathcal{S}}
\newcommand{\reaches}{\rightsquigarrow}
\newcommand{\myin}{\mathrm{in}}
\newcommand{\myout}{\mathrm{out}}
\newcommand{\myhop}{\mathrm{hop}}
\newcommand{\dhop}{d_{\myhop}}
\newcommand{\mypath}{L}
\newcommand{\MinIn}{\mathrm{MinIn}}
\newcommand{\MaxOut}{\mathrm{MaxOut}}
\newcommand{\Bin}{B_{\myin}}
\newcommand{\Bout}{B_{\myout}}
\newcommand{\Flarge}{F_{\mathrm{large}}}
\title{A Note on Reachability and  Distance Oracles for Transmission Graphs}
\author{Mark de Berg}{Department of Mathematics and Computer Science, TU Eindhoven, the Netherlands}{M.T.d.Berg@tue.nl}{}{}
\authorrunning{M.~de Berg} 
\keywords{Computational geometry, transmission graphs, reachability oracles, approximate distance oracles, clique-based separators}
\begin{document}
\maketitle

\begin{abstract}
Let $P$ be a set of $n$ points in the plane, where each point $p\in P$ has a transmission 
radius $r(p)>0$. The transmission graph defined by $P$ and the given radii, denoted by $\tg(P)$,
is the directed graph whose nodes are the points in $P$ and that contains the arcs $(p,q)$ 
such that $|pq|\leq r(p)$. 

An and Oh [Algorithmica 2022] presented a reachability oracle for
transmission graphs. Their oracle uses $O(n^{5/3})$ storage and, given two query points $s,t\in P$,
can decide in $O(n^{2/3})$ time if there is a path from $s$ to $t$ in $\tg(P)$. 
We show that the clique-based separators introduced by De~Berg~\etal [SICOMP 2020]
can be used to improve the storage of the oracle to $O(n\sqrt{n})$ and the query time to~$O(\sqrt{n})$.
Our oracle can be extended to approximate distance queries: 
we can construct, for a given parameter~$\eps>0$, 
an oracle that uses $O((n/\eps)\sqrt{n}\log n)$ storage and that can report in
$O((\sqrt{n}/\eps)\log n)$ time a value $\dhop^*(s,t)$ satisfying 
$\dhop(s,t) \leq \dhop^*(s,t) < (1+\eps)\cdot \dhop(s,t) + 1$, 
where $\dhop(s,t)$ is the hop-distance from $s$ to $t$. 
We also show how to extend the oracle to so-called continuous queries, where the
target point~$t$ can be any point in the plane.

To obtain an efficient preprocessing algorithm, we show that a clique-based separator
of a set~$F$ of convex fat objects in $\Reals^d$ can be constructed in $O(n\log n)$
time.
\end{abstract}

\section{Introduction}
\label{sec:intro}
Let $P$ be a set of $n$ points in the plane, where each point $p\in P$ has an associated 
\emph{transmission radius}~$r(p)>0$. The \emph{transmission graph} defined by $P$ and the 
given radii, denoted by $\tg(P)$, is the directed graph whose nodes are the points 
in~$P$ and that contains an arc~$(p,q)$ between two points $p,q\in P$ if and only if 
$|pq|\leq r(p)$, where $|pq|$ denotes the Euclidean distance between $p$ and~$q$. 
Transmission graphs are a popular way to model wireless communication networks. 

The study of transmission graphs leads to various challenging
algorithmic problems. In this paper we are interested in so-called \emph{reachability
queries}: given two query points $s,t\in P$, is there a path\footnote{Whenever we speak 
about \emph{paths} in $\tg(P)$, we always mean directed paths.} from $s$ to~$t$
in $\tg(P)$? A closely related query asks for the \emph{hop-distance} from $s$
to~$t$, which is the minimum number of arcs on any path from~$s$ to~$t$.
Reachability queries can be answered in~$O(1)$ time if we precompute 
for every pair $p,q\in P$ whether or not $q$ is reachable from~$p$ and then store that 
information in a two-dimensional array. This requires quadratic storage. 
Our goal is to develop a structure using subquadratic storage that allows us to quickly 
answer reachability queries.
Such a data structure is called a \emph{reachability oracle}. If the data structure
can also report the (approximate) hop-distance from $s$ to~$t$ then it is called an
\emph{(approximate) distance oracle}.
\medskip

Note that reachability queries for an undirected graph~$G$ can be trivially
answered in $O(1)$ time after computing the connected components of~$G$.
For directed graphs, however, efficient reachability oracles are much harder to design.
In fact, for arbitrary directed graphs one cannot hope for an oracle
that uses subquadratic storage. To see this, consider the class of directed bipartite
graphs~$G=(V\cup W,E)$ with $|V|=|W|=n/2$. Then any reachability oracle must 
use $\Omega(n^2)$ bits of storage in the worst case, otherwise two different graphs 
will end up with the same encoding and the oracle cannot answer all queries correctly on both graphs.
Surprisingly, even for sparse directed graphs no reachability oracles are known that
use subquadratic storage and have sublinear query time.
Directed planar graphs, on the other hand, admit very efficient reachability oracles: 
almost three decades of 
research~\cite{DBLP:conf/esa/ArikatiCCDSZ96,DBLP:conf/stoc/ChenX00,Djidjev1996,DBLP:journals/siamcomp/Frederickson87} 
eventually resulted in the optimal solution by Holm, Rotenberg and Thorup~\cite{DBLP:conf/focs/HolmRT15},
who presented an oracle that has $O(n)$ storage and ~$O(1)$ query time.
Moreover, there are various (approximate) distance oracles for directed
planar graphs; see for example~\cite{DBLP:conf/soda/GawrychowskiMWW18,DBLP:conf/focs/LeW21,DBLP:conf/soda/Wulff-Nilsen16}
and the references therein.

Planar graphs are sparse, but transmission graphs can be dense.
Nevertheless, the geometric structure of transmission graphs makes it possible to beat the 
quadratic lower bound on the amount of storage. 
Kaplan~\etal~\cite{DBLP:journals/algorithmica/KaplanMRS20}
presented three reachability oracles for transmission graphs. Their oracles use subquadratic
storage when $\Psi$, the ratio between the largest and smallest  transmission radius, is sufficiently
small. The first oracle has excellent performance---it uses $O(n)$ storage 
and can answer queries in~$O(1)$ time---but it only works when $\Psi<\sqrt{3}$. The second oracle 
works for any~$\Psi$, but it uses $O(\Psi^3 n\sqrt{n})$ storage and has $O(\Psi^3 \sqrt{n})$
query time. The third oracle has a reduced dependency on $\Psi$---it has
$O((\log^{1/3}\Psi) \cdot n^{5/3} \log^{2/3}n)$ storage and $O((\log^{1/3}\Psi)\cdot n^{2/3} \log^{2/3}n)$
query time---but an increased dependency on~$n$.
This third structure is randomized and answers queries correctly with high probability.
Recently, An and Oh~\cite{An-Oh-Algorithmica} presented the first reachability oracle
that has subquadratic storage independent of~$\Psi$. It uses $O(n^{5/3})$ storage
and has $O(n^{2/3})$ query time. They also presented an oracle that uses
$O(n\sqrt{n\log \Psi})$ storage and has $O(\sqrt{n\log\Psi})$ query time.

\subparagraph*{Our contribution and technique.}
We present a reachability oracle for transmission
graphs that uses $O(n\sqrt{n})$ storage and has $O(\sqrt{n})$ query time. This 
significantly improves both the storage and the query time of the oracle presented 
by An and Oh~\cite{An-Oh-Algorithmica}. 
Our oracle uses a divide-and-conquer approach based on separators. This is a standard
approach for reachability oracles---it goes back to (at least) the work of Arikati~\etal~\cite{DBLP:conf/esa/ArikatiCCDSZ96}
and is also used by Kaplan~\etal and by An and Oh.
It works as follows. 

Consider a directed graph~$G=(V,E)$ and let $S\subset V$ be a separator for~$G$.
Thus $V\setminus S$ can be partitioned into two parts $A,B$ of roughly equal size 
such that there are no arcs between~$A$ and~$B$.
The oracle now stores for each pair of nodes $(u,v) \in V\times S$ whether $u$
can reach~$v$ in~$G$, and whether $v$ can reach~$u$ in~$G$. In addition, there are
recursively constructed oracles for the subgraphs $G[A]$ and $G[B]$ induced by $A$ and~$B$, respectively.
Answering a reachability query with vertices~$s,t$ can then be done as follows:
we first check if $s$ can reach~$t$ via a node in~$S$, that is,
if there is a node~$v\in S$ such that $s$ can reach~$v$ and $v$ can reach~$t$.
Using the precomputed information this takes $O(|S|)$ time. If $s$ can
reach $t$ via a node in~$S$, we are done. Otherwise, $s$ can only reach $t$ if
$s$ and $t$ lie in the same part of the partition, say~$A$, and $s$
can reach $t$ in~$G[A]$. This can be checked recursively. 

For graph classes that admit a separator of size~$s(n)$, where $n := |V|$, this approach
leads to an oracle with $O(n \cdot s(n))$ storage and $O(s(n))$ query time,
assuming $s(n)=\Omega(n^{\alpha})$ for some~$\alpha>0$.
The problem when using this approach for transmission graphs is that 
they may not have separators of sublinear size. An and Oh~\cite{An-Oh-Algorithmica}
therefore apply the following preprocessing step. Let $\D_P$ be the set
of disks defined by the points in $P$ and their ranges. Then An and Oh iteratively 
remove collections of $\Omega(n^{1/3})$ disks that contain a common point.
Note that the disks in such a collection form a clique in the intersection graph defined by~$\D_P$.
For each collection~$C$, a structure is built to
check for two query points $s,t\in P$ if $s$ can reach~$t$ via a point corresponding to a disk in~$C$.
The set of disks remaining after this preprocessing step has 
\emph{ply}~$O(n^{1/3})$---any point in the plane is contained in $O(n^{1/3})$ of them.
This implies that the corresponding transmission graph admits a separator of 
size~$O(n^{2/3})$~\cite{MTTV-sep-sphere-packing,SW-geom-sep}, leading to 
a reachability oracle with $O(n^{5/3})$ storage and $O(n^{2/3})$ query time.\footnote{Actually, this preprocessing 
needs to be done at each step in the recursive construction of the oracle. 
Otherwise the ply would be $O(n_0^{2/3})$ instead of $O(n^{1/3})$,
where $n_0$ is the initial number of points and $n$ is the number of points
in the current recursive call, resulting in an extra logarithmic factor in storage.}
Our main insight is that we can use the so-called \emph{clique-based separators} recently
developed by De~Berg~\etal~\cite{bbkmz-ethf-20}. (A clique-based separator
for a graph~$\G$ is a separator that consists of a small number of cliques,
rather than a small number of nodes.) This allows use to avoid An and Oh's
preprocessing step and integrate the handling of cliques into the global separator approach,
giving an oracle with $O(n\sqrt{n})$ storage and $O(\sqrt{n})$ query time.

We also show how to adapt the 
oracle such that it can answer approximate hop-distance queries. In particular, 
we show how to construct, for a given parameter~$\eps>0$, an approximate distance oracle that 
uses $O((n/\eps)\sqrt{n}\log n)$ storage. The oracle can report, 
for two query points $s,t\in P$, a value $\dhop^*(s,t)$ satisfying
$\dhop(s,t) \leq \dhop^*(s,t) < (1+\eps)\cdot \dhop(s,t) + 1$, where 
$\dhop(s,t)$ denotes the hop-distance from $s$ to $t$ in $\tg(P)$.
The query time is $O((\sqrt{n}/\eps)\log n)$. 

Finally, we study so-called \emph{continuous reachability queries}, where the target point~$t$ 
can be any point in $\Reals^2$. Such a query asks, for a query pair $s,t\in P\times \Reals^2$, 
if there is a point $q\in P$ with $|qt|\leq r(q)$ such that $s$ can reach~$q$.
Kaplan~\etal\cite{DBLP:journals/algorithmica/KaplanMRS20} presented a data structure
using $O(n \log \Psi)$ storage such that any reachability oracle can be extended to
continuous reachability queries\footnote{Kaplan~\etal used the term \emph{geometric reachability queries}.}
with an additive overhead of $O(\log n\log \Psi)$ to the query time.
An and Oh~\cite{An-Oh-Algorithmica} also extended their reachability oracle 
to continuous reachability queries. The storage of their oracle remained the same,
namely~$O(n^{5/3})$, but the query increased by a multiplicative polylogarithmic factor to~$O(n^{2/3}\log^2 n)$.
We present a new data structure to extend any reachability oracle to continuous
queries. It uses $O(n\log n)$ storage and has an additive overhead of $O(\log^2 n)$
to the query time. 
Thus, unlike the method of Kaplan~\etal, its performance
is independent of the ratio~$\Psi$, and unlike the method of An and Oh, we do not
incur a penalty on the query time when combined with our reachability oracle:
the total query time remains $O(\sqrt{n})$ and the total storage remains $O(n\sqrt{n})$.
The extension to continuous queries also applies to approximate distance queries,
with an additional additive error of a single hop.
%
\medskip

To construct our oracles, we need an algorithm to construct a clique-based separator
for the intersection graph $\ig(\D)$ of a set~$\D$ of~$n$ disks in the plane.
De~Berg~\etal~\cite{bbkmz-ethf-20} present a brute-force construction algorithm for the case
where $\D$ is a set of convex fat objects in $\Reals^d$, running in $O(n^{d+2})$ time. This
is sufficient for their purposes, since they use the separator to develop sub-exponential
algorithms. For us the separator construction would become the bottleneck in our
preprocessing algorithm. Before presenting our oracles,
we therefore first show how to construct the clique-based separator  
in $O(n\log n)$ time. Since we expect that clique-based separators will
find more applications where the construction time is relevant,
we believe this result is of independent interest.
\section{A fast algorithm to construct clique-based separators}
\label{sec:separators}
Our distance oracle for transmission graphs will be based on the so-called clique-based
separators of De~Berg~\etal~\cite{bbkmz-ethf-20}. In this section we present an efficient
construction algorithm for these separators. Although we only need them for disks
in the plane, we will describe the construction algorithm for convex fat objects in~$\Reals^d$,
since we expect that an efficient construction of clique-based separators may find other uses.
\medskip

Let $F$ be a set of $n$ convex $\alpha$-fat objects in~$\Reals^d$, where an object
$o\subset \Reals^d$ is \emph{$\alpha$-fat} if there are concentric balls $\Bin(o)$ and $\Bout(o)$ with 
$\Bin(o)\subseteq o\subseteq\Bout(o)$ and $\radius(\Bin(o)) \geq \alpha\cdot \radius(\Bout(o))$.
We are interested in sets of objects that are $\alpha$-fat for some 
fixed, absolute constant~$\alpha>0$. We will therefore drop the parameter~$\alpha$ 
and simply speak of \emph{fat objects} from now on.

Let $\ig(F)$ be the intersection graph induced by~$F$, that is,
$\ig(F)$ is the undirected graph whose nodes correspond to the objects in $F$ and that 
has an edge between two objects $o_1,o_2\in F$ if an only if $o_1$ and $o_2$ intersect.
A \emph{$\delta$-balanced clique-based separator} of $\ig(F)$ is a 
collection~$\sep=\{C_1,\ldots,C_{|\sep|}\}$ of cliques in $\ig(F)$ 
whose removal partitions $\ig(F)$ into two parts 
of size at most $\delta n$, with no edges between the parts. Let $\gamma$ be a function that assigns
a wight to a clique, depending on its size. Then
the \emph{weight} of a separator~$\sep$ is defined as
$\sum_{C\in \sep}\gamma(|C|)$. In other words, the weight of a separator is the sum of the weights
of its constituent cliques. 
\begin{theorem}[De~Berg~\etal~\cite{bbkmz-ethf-20}]\label{thm:arbitrary-size-separator}
Let $d\geq 2$ and $\eps>0$ be constants and let $\gamma$ be a weight function such 
that $\gamma(t) = O(t^{1-1/d-\eps})$. Let $F$ be a set of $n$ convex\footnote{The result
also holds for non-convex fat objects if they are similarly sized, but we restrict our attention to
convex fat objects.} fat objects in~$\Reals^d$. Then the intersection
graph~$\ig(F)$ has a $\delta$-balanced clique-based separator~$\sep$ 
of weight $O(n^{1-1/d})$ for some fixed constant~$\delta<1$.
\end{theorem}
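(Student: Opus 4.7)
My plan is to establish the theorem by constructing a \emph{sphere separator} and turning the objects that cross the sphere into a small collection of cliques by grouping them according to geometric scale. The overall structure mirrors classical separator constructions for ball packings (Miller--Teng--Thurston--Vavasis) but exploits fatness to form cliques instead of counting individual objects.

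First, I would locate a balanced separating sphere. Consider the centers of the inner balls $\Bin(o)$ for $o\in F$. Using a centerpoint of these centers together with a shifting/integration argument over concentric spheres of varying radius, I would find a sphere $S=\bd B(c,r^*)$ such that the objects strictly inside and strictly outside $S$ each number at most $\delta n$, for some fixed $\delta<1$. The shifting argument additionally guarantees that the total "crossing cost"---measured by the surface area that the objects cut out of $S$---is not too large compared with the surface area of $S$ itself.

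Second, I would partition the set $F_{\mathrm{cross}}$ of objects that meet $\bd S$ by size. For $i\in\Integers$, let $F_i = \{o\in F_{\mathrm{cross}} : \radius(\Bout(o)) \in [2^i,2^{i+1})\}$. Cover a tubular neighborhood of $\bd S$ of thickness $\Theta(2^i)$ by a grid of cells of side $\Theta(2^i)$. Since every $o\in F_i$ contains an inner ball of radius at least $\alpha\cdot 2^i$ and intersects $\bd S$, all members of $F_i$ whose inner balls meet a single cell share a common point of that cell and thus form a clique in $\ig(F)$. Taking one such clique per cell yields the separator~$\sep$. The number of cells at scale $i$ that meet $\bd S$ is $O((r^*/2^i)^{d-1})$, while the size of each clique is bounded by fatness-based packing. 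Plugging these bounds into $\gamma(t)=O(t^{1-1/d-\eps})$ and summing over all scales $i$ produces a geometric series in $2^i$ whose sum is dominated by the scale for which $F_i$ is non-trivially populated; a standard charging argument shows the total is $O(n^{1-1/d})$, once $r^*$ is normalized so that a constant fraction of $F$ fits inside $B(c,r^*)$.

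The main obstacle, I expect, is the interplay between two conflicting requirements in choosing $r^*$: a larger separating sphere makes balance easier but allows more crossing objects at each scale, while a smaller sphere reduces crossings but can prevent balance. The usual remedy is to work with a well-chosen family of candidate radii around the centerpoint and pick one by an averaging argument. The role of the slack~$\eps$ in the exponent of $\gamma$ is precisely to absorb the logarithmic-scale sum that appears when one tallies clique weights over all levels; without this slack one would pick up an extra $\log n$ factor, so the concavity of $\gamma$ is not a cosmetic assumption but the essential ingredient that makes the final bound $O(n^{1-1/d})$.
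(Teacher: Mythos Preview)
Your high-level plan---a geometric separator plus grouping the crossing objects into cliques by size class---matches the spirit of the construction the paper cites from De~Berg~\etal\ (the theorem is quoted here, not re-proved, but Section~\ref{sec:separators} recapitulates the construction). The actual construction, however, differs from yours in the separator primitive and, crucially, in how balance is secured. Instead of a sphere around a centerpoint, De~Berg~\etal\ take a \emph{smallest $k$-enclosing hypercube}~$H_0$ for $k=n/(6^d{+}1)$, normalize its edge length to~$2$, and consider the family $\partial H(t)$, $1\le t\le 3$, of scaled copies. Balance holds for \emph{every} $t$ in this range: $H(3)$ is covered by $6^d$ unit hypercubes, each containing at most $k$ of the small objects by minimality of~$H_0$, so at most $6^d k$ objects lie inside any $H(t)$; and since $H(1)=H_0$ already encloses $k$ objects, at most $n-k=6^d k$ lie outside. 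Large objects meeting $H(3)$ are stabbed by $O(1)$ points and form $O(1)$ cliques; the remaining objects are split into size classes with stabbing grids, essentially as you describe. Because the entire interval $[1,3]$ is balanced, one simply picks the $t^*$ minimizing the crossed clique weight, and an averaging argument over $t\in[1,3]$ shows some $t^*$ achieves weight~$O(n^{1-1/d})$.

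The weak point in your plan is the balance step. A centerpoint in $\Reals^d$ controls half-spaces, not concentric spheres; ``shifting over concentric spheres'' around a fixed centerpoint does not by itself yield a constant-factor \emph{range} of radii that are all $\delta$-balanced, and you need precisely such a range for the averaging argument that bounds the crossing weight. The MTTV machinery you invoke actually lifts to $\Reals^{d+1}$ via a conformal map and takes a centerpoint there---substantially more than your sketch indicates. The smallest-$k$-enclosing-hypercube device is exactly what buys a guaranteed interval of balanced separators without that lifting, and it is the ingredient your outline is missing.
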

De~Berg~\etal\cite{bbkmz-ethf-20} show that when the objects have constant complexity,
then the clique-based separator of Theorem~\ref{thm:arbitrary-size-separator} can be 
constructed in~$O(n^{d+2})$ time, with a balance factor~$\delta=6^d/(6^d+1)$.
We show that their algorithm can be implemented to run in $O(n\log n)$ time. 
Our implementation results in a somewhat larger balance factor, namely $12^d/(12^d+1)$. 
This is still a fixed constant smaller than~1, 
which is all that matters in applications of the separator theorem. 
\medskip

\noindent \emph{Step 1: Finding candidate separators.} 
Step~1 starts by finding a \emph{smallest $k$-enclosing hypercube} for~$F$---this is
a smallest hypercube that contains at least~$k$ objects from~$F$---for $k:=n/(6^d+1)$.
De~Berg~\etal do this in a brute-force manner, taking $O(n^{d+2})$ time. 
Instead, we compute a 2-approximation\footnote{There is an efficient randomized algorithm to compute 
a smallest $k$-enclosing ball for a set of points in~$\Reals^d$~\cite{DBLP:journals/jacm/Har-PeledR15}, which we
can use, but we want to obtain a deterministic algorithm.} of a smallest $k^*$-enclosing hypercube
for $k^* := n/(12^d+1)$, as follows.

Consider a smallest $k^*$-enclosing hypercube $H^*$ for $F$.
Let $q$ be an arbitrary point in~$H^*$, and let $H_q$ denote the smallest $k^*$-enclosing 
hypercube centered at~$q$. Note that $H_q$ is a 2-approximation of~$H^*$, that is,
the edge length of~$H_q$ is at most twice the edge length of~$H^*$. 
Moreover, $H_q$ can trivially be computed in $O(n)$ time, for a given~$q$.
We can find a point~$q\in H^*$ as follows.
Pick a representative point $p_o\in o$ for every object~$o\in F$ and 
compute a \emph{weak $\eps$-net} with respect to convex ranges for the set $P_F := \{p_o:o\in F\}$,
where $\eps := 1/(12^d+1)$. (A weak $\eps$-net for convex ranges
is a point set~$Q \subset \Reals^d$ such that any convex range---and, hence, any hypercube---containing 
at least $\eps n$ points from~$P_F$ also contains at least one point from~$Q$.)
Since $\eps$ and~$d$ are fixed constants, there is such an $\eps$-net
of size~$O(1)$ and it can be constructed\footnote{Using
an $\eps$-net for convex ranges instead of for hypercubes only,
is overkill. However, we did not find a reference explicitly stating
that a linear-time deterministic algorithm exists that constructs an $\eps$-net for hypercubes.}
in~$O(n)$ time~\cite{DBLP:journals/dcg/MatousekW04}.
Note that $Q$ is guaranteed to contain a point~$q\in H^*$.
Thus, for each $q\in Q$, we compute the smallest smallest $k^*$-enclosing 
hypercube~$H_q$, and we report the smallest of these hypercubes. This hypercube,
which we denote by~$H_0$, is a 2-approximation a smallest $k^*$-enclosing hypercube.

The hypercube $H_0$ is used to define a set of potential separators, as explained next.
Assume without loss of generality that the edge length of~$H_0$ is~2, so that the
edge length of a smallest $k^*$-enclosing hypercube is at least~1.
Define $H(t)$ to be the copy of $H_0$ that is scaled by a factor~$t$ with respect to the center of~$H_0$.
Following De~Berg~\etal, we define $\Flarge\subseteq F$ to be the set of objects that intersect
the interior or boundary of~$H(3)$ and whose diameter is at least~$1/4$. 
These so-called \emph{large objects} can be stabbed by $O(1)$ points, and so they define
$O(1)$ cliques. These cliques will be put into the separator (plus some more
cliques, as detailed below), so we can 
focus on $F\setminus\Flarge$, the set of remaining objects.

Note that any hypercube $H(t)$ induces a separator $\sep(t)$ in a natural way: put
the objects from  $F\setminus\Flarge$ that intersect the boundary~$\bd H(t)$ into~$\sep(t)$, 
suitably grouped into cliques, in addition to the cliques comprising~$\Flarge$ that we already put into~$\sep(t)$.
De Berg~\etal~\cite{bbkmz-ethf-20} prove that one of the separators~$\sep(t)$ with $1\leq t\leq 3$
is a separator with the desired properties. To this end they first prove that
any separator~$\sep(t)$ with $1\leq t\leq 3$ is $(6^d/(6^d+1))$-balanced. In their proof they work with a smallest $n/(6^d+1)$-enclosing
hypercube~$H_0$. The key argument is that $H(3)$ can be covered by $6^d$ unit hypercubes 
in such a way that any object in $F\setminus \Flarge$ is contained in one of them. Since each 
of the covering hypercubes contains at most $n/(6^d+1)$, this results in a balance factor
of~$6^d/(6^d+1)$. We can use exactly the same argument, except that our~$H_0$ is only a 2-approximation of 
a smallest $k^*$-enclosing hypercube. Thus we need $12^d$ unit hypercubes in our cover;
see Fig.~\ref{fig:sep-construction}(i).
Since $k^*=n/(12^d+1)$, this implies that our separators $\sep(t)$ 
are~$(12^d/(12^d+1))$-balanced.
\begin{figure}
\begin{center}
\includegraphics{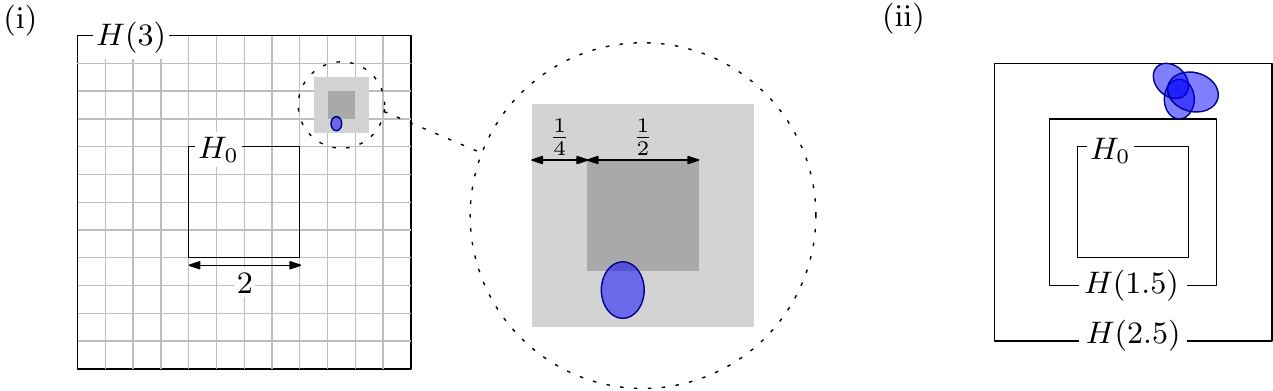}
\end{center}
\caption{(i) Hypercube~$H(3)$ is partitioned into $12^d$ cells. For each cell,
             a unit hypercube with the same center is created; see the dark grey cell and
             the light grey unit hypercube. Any object
             of diameter at most~$1/4$, such as the blue object, intersecting some cell is
             completely contained in the corresponding unit hypercube.
        (ii) For the clique~$C$ shown in blue, we have $I_C=[1.5,2.5]$.}
\label{fig:sep-construction}
\end{figure}
\medskip

\noindent \emph{Step 2: Constructing the cliques and finding a low-weight separator.}
Step~2 starts by partitioning the set of objects that may intersect the separators~$\sep(t)$ into 
cliques. To this end the objects are partitioned into size classes,
and for each\footnote{The smallest size class is handled differently,
namely by creating a singleton clique for each object in this class. For
simplicity we do not explicitly mention these singleton cliques from now on.} 
size class $F_s$ a point set $Q_s$ of size $O(n/2^{2sd})$
is generated that stabs all objects in $F_s$. Each point
$q\in Q_s$ defines a clique~$C_q$, which consists of objects 
from $F_s$ containing~$q$; here an object stabbed by multiple points can
be assigned to an arbitrary one of them. The set $Q_s$ consists
of grid points of a suitably defined grid, and they are not only guaranteed
to stab the objects $o\in F_s$ but they even stab the inner balls $\Bin(o)$.
Thus, we can assign each object~$o\in F_s$ to the point~$q\in Q_s$ closest
to the center of~$\Bin(o)$. Since~$Q_s$ forms a grid, this can be done in 
$O(|F_s|\log n)$ time, or even in $O(|F_s|)$ time if we would allow the floor function. 
The total time to construct the set $\C$ of all cliques over all
size classes $F_s$ is therefore $\sum_s O(n/2^{2sd} + |F_s|\log n)$,
which is $O(n\log n)$ since $\sum_s |F_s|\leq n$.

De Berg~\etal prove that one of the separators ~$\sep(t)$, for $1\leq t\leq 3$
has the desired weight, that is, that the cliques from $\C$ intersected by~$\sep(t)$ have total weight~$O(n^{1-1/d})$. 
We can find this separator in $O(n\log n)$ time, as follows. For each clique~$C\in\C$,
define 
\[ I_C := \{ t : 1\leq t\leq 3 \mbox{ and } \bd H(t)\cap \U(C) \neq \emptyset\},
\]
where $\U(C)$ denote the union of the objects comprising the clique~$C$.
In other words, $I_C\subseteq [1,3]$ contains the values of $t$ such that $\bd H(t)$
intersect at least one object from the clique~$C$; see Fig.~\ref{fig:sep-construction}(ii).
The interval $I_C$ can trivially be computed in $O(|C|)$ time, 
so computing all intervals takes $O(n)$ time in total.
We then find a value $t^*\in [1,3]$ that minimizes
$\sum_{C: t\in I_C} \gamma(|C|)$, which can easily be done in~$O(n\log n)$ time,
and report~$\sep(t^*)$ as the desired separator.  

In conclusion, both steps of the construction algorithm can be implemented
to run in $O(n\log n)$ time, leading to the following theorem.
\begin{theorem}\label{thm:separator-construction}
Let $F$ be a set of $n$ constant-complexity fat objects in $\Reals^d$, where $d$ is a fixed constant.
Then we can construct a clique-based separator for $\ig(F)$ with the properties given in 
Theorem~\ref{thm:arbitrary-size-separator} in $O(n\log n)$ time. 
\end{theorem}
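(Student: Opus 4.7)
My plan is to follow the two-step construction template of De~Berg~\etal\ laid out above and simply re-implement each of its sub-routines to run in $O(n\log n)$ time. Correctness, the balance factor $12^d/(12^d+1)$, and the $O(n^{1-1/d})$ weight bound have already been argued in the discussion preceding the theorem: they follow from covering $H(3)$ by $12^d$ unit hypercubes together with the weight analysis of~\cite{bbkmz-ethf-20} applied to our $2$-approximate~$H_0$. So only the time bound needs verification.

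For Step~1, I would compute a constant-size weak $\eps$-net $Q$ for convex ranges over the representative points $P_F=\{p_o:o\in F\}$ with $\eps=1/(12^d+1)$; by~\cite{DBLP:journals/dcg/MatousekW04} this takes $O(n)$ time. Any smallest $k^*$-enclosing hypercube contains $\eps n$ points of $P_F$ and therefore contains at least one $q\in Q$. For each of the $O(1)$ points $q\in Q$, the smallest axis-aligned hypercube centered at~$q$ containing $k^*$ points of $P_F$ is determined by the $k^*$-th order statistic of $\{\|p_o-q\|_\infty : o\in F\}$, computable by linear-time selection in $O(n)$ time. Returning the smallest such hypercube yields $H_0$ in $O(n)$ total time.

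For Step~2, I would partition $F\setminus\Flarge$ into size classes~$F_s$ and for each build the $O(|F_s|/2^{2sd})$-size stabbing grid~$Q_s$. Each object $o\in F_s$ is assigned to the grid point nearest the center of $\Bin(o)$ by $O(\log n)$ binary searches along the $d$ coordinate axes; summed over all size classes this is $O(n\log n)$. The $O(1)$ cliques covering $\Flarge$ are found in $O(n)$ time by stabbing $H(3)$ with a constant-size point set.

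The main obstacle, and the only substantive new ingredient, is efficiently minimizing the separator weight over the scaling parameter $t\in[1,3]$. Here the plan is an interval sweep: for every clique $C\in\C$, I would compute in $O(|C|)$ time the constant-size set $I_C\subseteq[1,3]$ of scalings for which $\bd H(t)$ intersects some object in $C$; this is possible because the objects have constant complexity and $H(t)$ is axis-aligned. Summed over all cliques this costs $O(n)$ time and yields $O(n)$ interval endpoints. Sorting these endpoints in $O(n\log n)$ time and sweeping from left to right maintains $W(t)=\sum_{C:t\in I_C}\gamma(|C|)$ by a simple additive update of $\pm\gamma(|C|)$ at each event, so a minimizer $t^*$ can be read off in one additional pass. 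Outputting the cliques of $\Flarge$ together with the cliques $C$ satisfying $t^*\in I_C$ gives the desired separator in $O(n\log n)$ time overall.
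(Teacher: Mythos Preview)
Your proposal is correct and mirrors the paper's argument essentially step by step: the weak $\eps$-net for the $2$-approximate $k^*$-enclosing hypercube, the size-class grid stabbing for clique formation, and the endpoint sort plus sweep over $t\in[1,3]$ to minimize $\sum_{C:t\in I_C}\gamma(|C|)$ are exactly the ingredients the paper uses. The only harmless slip is the phrase ``constant-size set $I_C$''---each $I_C$ is in fact a single interval (all objects in a clique share a stabbing point, so their $\ell_\infty$-distance ranges to the center of $H_0$ overlap), which is precisely what your endpoint-sorting sweep already assumes.
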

\section{The oracles}
\label{sec:oracle}
Before we describe our oracles in detail, we need to introduce some notation.
Recall that $P$ is the set of input points, and that $r(p)$ denotes the transmission radius 
of a point~$p\in P$. We denote the disk of radius~$r$ centered at some point~$z$ by $D(z,r)$,
and for $p\in P$ we define $D_p := D(p,r(p))$. We call $D_p$ the \emph{transmission disk}
of~$p$. Note that the arc $(p,q)$ is present in the
transmission graph~$\tg(P)$ if and only if~$q\in D_p$. We write $p\reaches q$
to indicate that $p$ can reach $q$ in~$\tg(P)$. In the following we do not
distinguish between the points in~$P$ and the corresponding nodes in $\tg(P)$.

\subparagraph*{The basic reachability oracle.}
Let $\mypath=q_1,q_2,\ldots,q_{|L|}$  be a path in $\tg(P)$. The first ingredient
that we need is an oracle for what we call a \emph{via-path query} with respect to
the given path~$L$: given two query points~$s,t\in P$, 
is it possible for $s$ to reach $t$ via a node in $\mypath$? 
In other words, a via-path query asks if there exist a node~$q_i\in \mypath$ 
such that $s\reaches q_i \reaches t$. We call such an oracle a \emph{via-path oracle}.
As observed by An and Oh~\cite{An-Oh-Algorithmica}, there is a simple via-path oracle 
with $O(1)$ query time that uses $O(n)$ storage, irrespective 
of the length of the path~$L$. The oracle stores for each point $p\in P$,
including the points in~$\mypath$, two values:
\[
\MinIn_{\mypath}[p] := \min \{ i : p \reaches q_i \mbox{ and } q_i \in \mypath \}   \mbox{ and } 
\MaxOut_{\mypath}[p] := \max \{ i : q_i \reaches p \mbox{ and } q_i \in \mypath \}.
\]
In other words, $\MinIn_{\mypath}[p]$ is the minimum\footnote{We define the minimum over the
empty set to be $\infty$, which means that $\MinIn_{\mypath}[p]=\infty$ if
no $q_i$ can be reached from~$p$. Similarly,  the maximum over the
empty set is defined to be~$-\infty$, so $\MaxOut_{\mypath}[p]=-\infty$ if
no $q_i$ can reach~$p$.} 
index of any point $q_i\in \mypath$ that can be
reached from~$p$, where the points in $\mypath$ are numbered in order along the path.
Note that the path from $p$ to $q_{i^*}$, where if $i^* := \MinIn_{\mypath}[p]$,
may pass through other points from $\mypath$ (which then must be successors of $q_{i^*}$ along~$\mypath$).
A similar statement holds for the path from $q_{j^*}$ to $p$ where $j^* := \MaxOut_{\mypath}[p]$.
Clearly the oracle needs $O(n)$ storage. It is easy to see that a via-path query with points $s,t$ can be answered by checking
if $\MinIn_{\mypath}[s] \leq \MaxOut_{\mypath}[t]$; see the paper
by An and Oh~\cite[Lemma 13]{An-Oh-Algorithmica}.

An and Oh show that a
via-path oracle can be constructed in $O(n)$ time, provided a
linear-size spanner of $\tg(P)$ is available;
see the proof of Lemma~14 in their paper.
(A spanner for $\tg(P)$ is a subgraph such that for any two points~$p,q\in P$ we have:
$p$ can reach $q$ in $\tg(P)$ if and only if $p$ can reach $q$ in the subgraph.)
They also show that a spanner can be computed in $O(n\log^3n)$ time~\cite[Theorem 5]{An-Oh-Algorithmica}.
Note that the spanner needs to be computed only once, so its construction
does not influence the preprocessing time of our final oracle 
as stated later in Theorem~\ref{thm:reachability-oracle}.
%
\medskip

Now let $\D_P := \{ D_p : p\in P\}$ be the set of disks defined by the points in $P$ and
their transmission radii. Let $\ig(\D_P)$ denote the intersection graph of $\D_P$,
that is, the undirected graph with node set $\D_P$ that contains an edge $(D_p,D_q)$
if and only if $D_p\cap D_q\neq \emptyset$. Let $C$ be a clique in $\ig(\D_P)$ such that all
disks in $C$ can be stabbed by a single point~$x_C$. We call such a clique~$C$, 
a \emph{stabbed clique}. A crucial observation by An and Oh~\cite{An-Oh-Algorithmica}
is that the set $P(C)\subseteq P$ of points corresponding to a stabbed clique~$C$ can be 
covered by six paths in $\tg(P)$. More precisely, $P(C)$ can be partitioned into
(at most) six subsets $P_1(C),\ldots,P_6(C)$ with the following 
property: if we order the points in $P_i(C)$ by decreasing transmission radius
then there is an arc from each point to its direct successor, so the
points form a path in~$\tg(P)$; see Figure~\ref{fig:cones}.
In fact, and this will be relevant later, each point in $P_i(C)$ has an arc to all its 
successors in the ordering. We call such a path a \emph{transitive path}. 
(An and Oh use the term~\emph{chain}.)
\begin{figure}
\begin{center}
\includegraphics{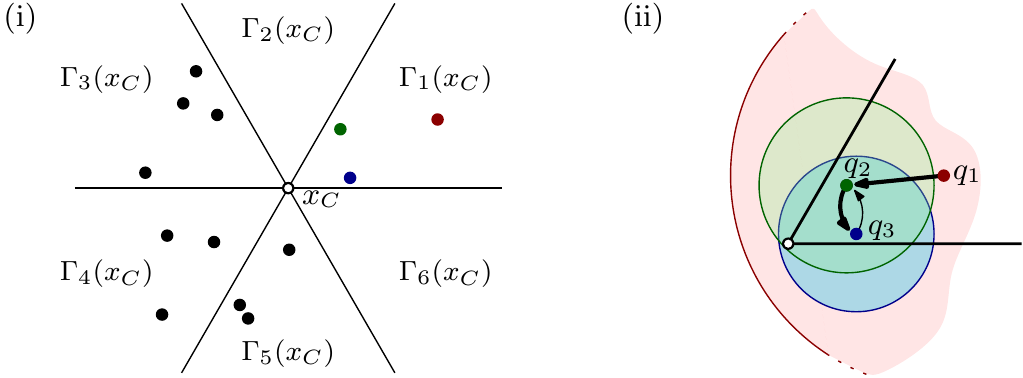}
\end{center}
\caption{(i) To partition $P(C)$ into $P_1(C),\ldots,P_6(C)$ we partition
              the plane into six 60-degree cones $\Gamma_1(x_C),\ldots,\Gamma_6(x_C)$
              with apex at $x_C$---we
              call these the \emph{canonical cones} of~$x_C$---and assign 
              each point from $P(C)$ to the cone containing it, with ties broken arbitrarily. 
              The set $P_i(C)$ then contains the points assigned to $\Gamma_i(x_C)$.
              In this example the sets $P_2(C)$ and $P_6(C)$  are empty.
         (ii) If we sort the points assigned to a cone by increasing transmission radius,
              then each point $q_i$ must have an arc to any successor $q_j$ in the ordering. 
              The reason is that the angle $\angle q_i x_C q_{j}$ is at most 60~degrees, 
              so $|q_i q_{j}| \leq \max(|q_i x_C|,|q_{j} x_C|) \leq \max(r(q_i),r(q_{j}))=r(q_i)$.}
\label{fig:cones}
\end{figure}

It is well known and easy to see that any clique in $\ig(\D_P)$ can be partitioned
into $O(1)$ stabbed cliques. Thus we obtain the following lemma.
\begin{lemma}\label{lem:cover-clique-by-paths}
Let $C$ be a clique in $\ig(\D_P)$ and let $P(C)\subseteq P$ be the points corresponding to the
disks in~$C$. Then we can cover the points in $P(C)$ by $O(1)$ transitive paths in $\tg(P)$.
\end{lemma}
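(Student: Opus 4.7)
The plan is to reduce the statement to the fact (already observed in connection with Figure~\ref{fig:cones}) that any \emph{stabbed} clique corresponds to a point set covered by at most six transitive paths in $\tg(P)$. Thus the only thing left to show is that an arbitrary clique $C$ in $\ig(\D_P)$ can be partitioned into $O(1)$ stabbed sub-cliques $C_1,\ldots,C_k$, each admitting a common stabbing point~$x_{C_i}$.

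For this I would invoke the classical Helly-type piercing result for disks (Danzer; Stach\'o sharpened the constant to three): any family of pairwise intersecting closed disks in the plane can be pierced by a constant number of points. Since the disks of $C$ are pairwise intersecting, there exist points $x_1,\ldots,x_k$ with $k=O(1)$ such that every disk in $C$ contains at least one~$x_i$. Assigning each disk of $C$ to one of the $x_i$ that pierces it, with ties broken arbitrarily, yields a partition $C = C_1 \cup \cdots \cup C_k$, where $C_i$ is a stabbed clique with stabbing point~$x_i$.

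Applying Lemma's underlying observation (six-cone decomposition around~$x_i$) to each $C_i$ then covers $P(C_i)$ by at most six transitive paths in $\tg(P)$. Taking the union over $i=1,\ldots,k$ covers $P(C)$ by at most $6k = O(1)$ transitive paths, which is the claim.

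The main obstacle is essentially nonexistent; only the fact that pairwise intersecting disks admit a piercing set of constant size is needed, and the sharp value of the constant plays no role. If one preferred to avoid citing the piercing theorem, a short self-contained argument suffices: let $D_0$ be a disk of minimum radius in $C$ with center $z_0$ and radius $r_0$. Every disk $D_p\in C$ satisfies $r(p)\geq r_0$ and meets~$D_0$, so its center lies in $D(z_0,r_0+r(p))$; a constant-size grid of points at spacing proportional to~$r_0$ inside~$D_0$, together with a constant-size set of directions, then stabs every disk of $C$, yielding the same $O(1)$ bound.
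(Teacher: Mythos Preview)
Your proposal is correct and follows essentially the same route as the paper: the paper also first states that any clique in $\ig(\D_P)$ can be partitioned into $O(1)$ stabbed cliques (citing this as ``well known and easy to see''), and then applies the six-cone decomposition from Figure~\ref{fig:cones} to each stabbed sub-clique. Your write-up is in fact more explicit than the paper's, since you name the relevant piercing result (Danzer/Stach\'o); the optional self-contained stabbing argument at the end is a bit loosely phrased (the grid should cover a slightly enlarged copy of $D_0$, and the remark about ``directions'' is unnecessary), but the primary argument via the piercing theorem is complete.
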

A \emph{via-clique oracle}, for a given clique~$C$ in $\ig(\D_P)$, is a data structure
that can answer \emph{via-clique queries}: given two query points~$s,t$,
decide if $s$ can reach $t$ in $\tg(P)$ via a point in~$P(C)$. 
It is important to note that the query asks if $s$ can reach $t$ in $\tg(P)$, not in~$\ig(\D_P)$.
Lemma~\ref{lem:cover-clique-by-paths} implies that we can answer a via-clique
query using $O(1)$ via-path queries.

The following lemma summarizes
(and slightly extends, since they only considered stabbed cliques)
the intermediate result from An and Oh on which we build.
\begin{lemma}\label{lem:vc-query}
Let $C$ be a clique in $\ig(\D_P)$. Then there is a via-clique oracle for $C$ that uses 
$O(|C|)$ storage and that can answer via-clique queries in $O(1)$ time. 
The oracle can be constructed in $O(n)$ time, provided we have a spanner of the
transmission graph~$\tg(P)$ available.
\end{lemma}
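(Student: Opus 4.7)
My plan is to derive the via-clique oracle by combining Lemma~\ref{lem:cover-clique-by-paths} with the via-path oracle machinery summarized just before the lemma statement.  First, I would invoke Lemma~\ref{lem:cover-clique-by-paths} to cover $P(C)$ by $k = O(1)$ transitive paths $L_1, \ldots, L_k$ in $\tg(P)$.  For each $L_i$ I would build a via-path oracle that stores the arrays $\MinIn_{L_i}$ and $\MaxOut_{L_i}$; given the spanner of $\tg(P)$, An and Oh's construction (the proof of their Lemma~14) produces each such oracle in $O(n)$ time.  The per-clique data stored consists of the partition of $P(C)$ into these $O(1)$ paths together with the position of each point along its path, yielding $O(|C|)$ total storage attributable to the clique itself.

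To answer a via-clique query with endpoints $s, t \in P$, I would perform $k = O(1)$ via-path queries, one per~$L_i$, returning \YES precisely when $\MinIn_{L_i}[s] \leq \MaxOut_{L_i}[t]$ for some~$i$.  Correctness is immediate: any witness $q \in P(C)$ with $s \reaches q \reaches t$ necessarily lies on some~$L_i$, yielding \YES from the corresponding via-path query; conversely, any \YES answer exhibits a witness in $L_i \subseteq P(C)$.  Since each via-path query runs in $O(1)$ time and $k = O(1)$, the total query time is $O(1)$, and the construction takes $O(n)$ time by summing over the $k$ oracles.

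The only point that requires a moment's thought---the ``slight extension'' flagged in the text---is that Lemma~\ref{lem:cover-clique-by-paths} handles arbitrary cliques in $\ig(\D_P)$, not merely stabbed ones.  This is taken care of by the standard observation, stated just before that lemma, that a clique of disks in the plane decomposes into $O(1)$ stabbed subcliques, after which the six-cone construction of Fig.~\ref{fig:cones} applies to each subclique independently.  Beyond this, the proof is a direct assembly of the preceding ingredients, with no new technical obstacles.
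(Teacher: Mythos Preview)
Your approach is exactly the one the paper intends: invoke Lemma~\ref{lem:cover-clique-by-paths} to cover $P(C)$ by $O(1)$ transitive paths, build a via-path oracle for each, and answer a via-clique query by $O(1)$ via-path queries. The paper does not give a separate proof; the lemma merely summarizes the preceding discussion, and your write-up is faithful to it.

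There is, however, an inconsistency in your storage accounting. You correctly state that each via-path oracle stores the arrays $\MinIn_{L_i}[\cdot]$ and $\MaxOut_{L_i}[\cdot]$, and these arrays are indexed by \emph{all} points $p\in P$, hence have size~$\Theta(n)$, not~$\Theta(|C|)$. Your subsequent sentence (``the per-clique data stored consists of the partition of $P(C)$\ldots yielding $O(|C|)$ total storage'') does not account for these arrays and therefore does not justify the $O(|C|)$ bound claimed in the lemma. In fact the $O(|C|)$ in the lemma statement appears to be a slip: the paper's own via-path oracle explicitly uses $O(n)$ storage, and the storage analysis in the proof of Theorem~\ref{thm:reachability-oracle} (``the total storage for the via-clique oracles of the cliques~$C\in\sep$ is $O(n\sqrt{n})$'') only balances if each of the $O(\sqrt{n})$ via-clique oracles uses $O(n)$ storage; with $O(|C|)$ per clique the total would be $O(n)$. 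So the right fix is to state the per-clique storage as $O(n)$, not to try to argue $O(|C|)$.
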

To obtain a reachability oracle for transmission graphs we combine 
Lemma~\ref{lem:vc-query} with the machinery of clique-based separators.
Recall from the previous section that a $\delta$-balanced clique-based separator 
of the intersection graph $\ig(\D)$ of a set~$\D$ of $n$ disks, is a set $\sep$ 
of cliques in~$\ig(\D)$ whose removal partitions $\ig(\D)$ into two parts of size at most $\delta n$, 
for some  fixed constant $\delta<1$. Note that if we set the weight function $\gamma$
to be $\gamma(|C|)=1$, then the weight of a clique-based separator is simply its number of cliques.
Theorem~\ref{thm:separator-construction} thus implies that we can compute a clique-based
separator consisting of $O(\sqrt{n})$ cliques in $O(n\log n)$ time.
Clique-based separators provide a simple mechanism to recursively construct  an efficient
reachability oracle for transmission graphs, as follows.

\begin{quotation}
\noindent
Construct a clique-based separator~$\sep$ consisting of $(\sqrt{n})$ cliques for the
intersection graph~$\ig(\D_P)$, where $\D_P$ is the set of disks defined 
by the points in~$P$ and their transmission radii.
\begin{itemize}
\item For each clique $C\in \sep$, construct a via-clique oracle using Lemma~\ref{lem:vc-query}.
\item Let $P_A,P_B\subset P$ be the two subsets of points corresponding to the 
       partition of $\ig(\D_P)$ induced by~$\sep$. Recursively construct reachability oracles for
       $P_A$ and~$P_B$.
\end{itemize}
The recursion ends when $|P|\leq 1$; we then simply store the point in~$P$ (if any).
\end{quotation}

Answering a reachability query with query points $s$ and $t$ is done as follows.
We first perform a via-clique query with $s$ and $t$ for each clique $C\in\sep$. 
If any of the queries returns {\sc yes} then $s$ can reach~$t$. If all queries return {\sc no},
and $s$ and $t$ belong to different parts of the partition, then 
$s$ cannot reach~$t$. If neither of these two cases apply, then we
recursively check if $s$ can reach $t$ in $\tg(P_A)$ or $\tg(P_B)$, depending
on whether $s,t\in P_A$ or $s,t\in P_B$.
This leads to the following theorem.
\begin{theorem}\label{thm:reachability-oracle}
Let $P$ be a set of $n$ points in~$\Reals^2$. Then we can answer reachability queries
on~$P$ in $O(\sqrt{n})$ time with an oracle using $O(n\sqrt{n})$ storage, which 
can be constructed in $O(n\sqrt{n})$ time. 
\end{theorem}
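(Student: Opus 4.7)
The plan is to verify correctness via a short observation bridging the directed graph $\tg(P)$ and the undirected intersection graph $\ig(\D_P)$, and then to analyze storage, query time and preprocessing through the standard divide-and-conquer recurrences driven by a balanced separator of $O(\sqrt{n})$ cliques.

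The step I expect to carry real content is correctness, in particular the claim that a separator of the undirected graph $\ig(\D_P)$ also blocks directed paths in $\tg(P)$. I would prove this from the observation that every arc $(p,q)\in \tg(P)$ corresponds to an edge $\{D_p,D_q\}\in \ig(\D_P)$: the defining condition $q\in D_p$ also trivially gives $q\in D_q$, so $D_p\cap D_q\neq\emptyset$. Hence every directed $s$-to-$t$ path in $\tg(P)$ projects to a walk in $\ig(\D_P)$, so whenever $s$ and $t$ lie in different parts of the partition induced by $\sep$, the walk must visit a point of some $P(C)$ with $C\in\sep$, and the corresponding via-clique query must return \YES. When $s,t$ lie in the same part, any witnessing path that leaves that part must cross $\sep$ twice and is likewise detected by a via-clique query; so if all via-clique queries return \NO, the path must stay inside the current part and it suffices to recurse on the induced subgraph $\tg(P_A)$ or $\tg(P_B)$ (which coincides with the transmission graph of that point set, since the radii do not change). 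Induction on $|P|$ with the trivial base case $|P|\le 1$ then yields correctness.

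The quantitative analysis is essentially bookkeeping. Applying Theorem~\ref{thm:separator-construction} with weight function $\gamma\equiv 1$ produces a separator of $O(\sqrt{n})$ cliques and balance factor $\delta<1$ in $O(n\log n)$ time; Lemma~\ref{lem:vc-query} then furnishes a via-clique oracle per clique of $O(1)$ query time and $O(n)$ storage and construction time, which together contribute $O(n\sqrt{n})$ at the current node. The spanner required by Lemma~\ref{lem:vc-query} is built once up front in $O(n\log^3 n)$ time and reused throughout the recursion. With $n_A,n_B\le \delta n$ and $n_A+n_B\le n$, the storage and preprocessing recurrence $f(n)=O(n^{3/2})+f(n_A)+f(n_B)$ solves to $O(n^{3/2})$ via the inequality
\[
n_A^{3/2}+n_B^{3/2}\le (n_A+n_B)\sqrt{\max(n_A,n_B)}\le \sqrt{\delta}\,n^{3/2},
\]
which makes the total work at recursion depth $k$ shrink as $\delta^{k/2}n^{3/2}$. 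The query recurrence $Q(n)=O(\sqrt{n})+Q(\max(n_A,n_B))$ is likewise geometric in $\sqrt{\delta^k n}$ and yields $Q(n)=O(\sqrt{n})$. Once the $\tg(P)\to\ig(\D_P)$ correspondence is in place, everything else reduces to routine divide-and-conquer recurrences, so the correctness step is the only place I expect any genuine thought.
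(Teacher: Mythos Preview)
Your proposal is correct and follows essentially the same approach as the paper's proof: the paper also reduces correctness to the observation that a separator in $\ig(\D_P)$ is a separator in $\tg(P)$ (because every arc of $\tg(P)$ induces an edge of $\ig(\D_P)$), and then solves the same storage, query-time, and preprocessing recurrences driven by the $\delta$-balanced separator of $O(\sqrt{n})$ cliques. Your treatment is simply more explicit---you spell out the arc-to-edge correspondence, the case where a path leaves and re-enters a part, and the geometric decay in the recurrence---but these are elaborations of the same argument rather than a different route.
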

\begin{proof}
Observe that a separator in $\ig(\D_P)$ is also a separator in $\tg(P)$.
In other words, if $P_A,P_B$ are the two subsets of $P$ corresponding to
the parts in the partition of $\ig(\D_P)$ induced by~$\sep$, then there are
no arcs between the points in~$P_A$ and those in~$P_B$. This implies that
queries are answered correctly.

The total storage for the via-clique oracles of the cliques~$C\in\sep$
is $O(n\sqrt{n})$.
Hence, the storage $M(n)$ of our oracle satisfies the recurrence
$M(n) = O(n\sqrt{n}) +  M(n_1) + M(n_2)$,
where $n_1,n_2\leq \delta n$ for some constant $\delta<1$ and $n_1+n_2 \leq n$. 
It follows that $M(n) = O(n\sqrt{n})$. The query time $Q(n)$ satisfies
$Q(n) = O(\sqrt{n}) +  Q(n_1)$ where $n_1\leq \delta n$, 
and so $Q(n) = O(\sqrt{n})$. 

It remains to discuss the preprocessing time. By Lemma~\ref{lem:vc-query} 
the time to construct a via-clique oracles is~$O(n)$, after $O(n\log^3 n)$ preprocessing
to construct a spanner of~$\tg(P)$,
which only needs to be done once. Hence, the total
time to construct a via-clique oracle for each of the $O(\sqrt{n})$ cliques
is $O(n\sqrt{n})$. Moreover, by
Theorem~\ref{thm:separator-construction} the clique-based separator can be constructed
in $O(n\log n)$ time. Thus the preprocessing time satisfies the same recurrence as
the amount of storage, which implies that the total construction time is $O(n\sqrt{n})$.
\end{proof}

\subparagraph{An approximate distance oracle.}
We now extend our reachability oracle to an approximate distance oracle.
Thus the oracle must be able to approximate the hop distance\footnote{We define 
$\dhop(s,t)=\infty$ if $s$ cannot reach $t$; in this case the oracle should return~$\infty$.}
$\dhop(s,t)$ for two query points~$s,t\in P$. 

To answer approximate distance queries we need to extend the via-path oracle so that,
for a given transitive path $\mypath$ and query points~$s,t\in P$, it can approximate
$d_{\mypath}(s,t)$, the length of the shortest path 
from $s$ to $t$ via a point in~$\mypath$. The via-path oracle we
presented above is not suitable for that: it only records the first point on~$\mypath$
that can be reached from~$s$ and the last point from which we reach reach~$t$,
and going via these points may take many more hops than a shortest 
path via~$\mypath$ would. To overcome this problem, we will store
the first point on~$\mypath$ that can be reached from~$s$
(and the last point that can reach $t$) \emph{with a path of a given length}. 
To avoid increasing the storage too much, we will not do this for all 
possible lengths, but only for logarithmically many lengths. More precisely,
our via-path oracle stores, for all~$p\in P$ and $j\in\{-1,\ldots,\ceil{\log_{1+\eps} n}\}$, the values
\[
\MinIn_{\mypath}[p,j] := \min \{ i : \dhop(p,q_i) \leq (1+\eps)^j \mbox{ and } q_i \in \mypath \}
\]
and
\[ 
\MaxOut_{\mypath}[p,j] := \max \{ i : \dhop(q_i,j) \leq (1+\eps)^j \mbox{ and } q_i \in \mypath \}.
\]
Here we include $j=-1$, since for $p\in\mypath$ there is a path of zero hops to (resp.~from)~$\mypath$,
namely to (resp.~from)~$p$ itself.
The next lemma shows that we can get a $(1+\eps)$-approximation of $d_{\mypath}(s,t)$
using the arrays $\MinIn_{\mypath}$ and $\MaxOut_{\mypath}$, up to an additional additive error of a single hop.
\begin{lemma}\label{lem:distance-vp-oracle}
Let $d^*_{\mypath}(s,t) := \min \{ (1+\eps)^j + (1+\eps)^k : \MinIn_{\mypath}[s,j] \leq \MaxOut_{\mypath}[t,k]  \}$.
If $d_{\mypath}(s,t)=\infty$ then $d^*_{\mypath}(s,t)=\infty$, and otherwise
$d_{\mypath}(s,t) \leq d^*_{\mypath}(s,t) < (1+\eps) \cdot d_{\mypath}(s,t)+1$.
\end{lemma}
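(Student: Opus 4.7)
The proof splits into the two directions $d_{\mypath}(s,t) \leq d^*_{\mypath}(s,t)$ and $d^*_{\mypath}(s,t) < (1+\eps) d_{\mypath}(s,t) + 1$. Both exploit the transitive structure of $\mypath$: whenever $\MinIn_{\mypath}[s,j] \leq \MaxOut_{\mypath}[t,k]$, the two witnesses in $\mypath$ can be pasted together using at most one internal transitive arc, so they correspond to a real walk from $s$ through $\mypath$ to $t$.

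For $d_{\mypath}(s,t) \leq d^*_{\mypath}(s,t)$, I would take a pair $(j,k)$ realizing the minimum in the definition of $d^*_{\mypath}$, set $i := \MinIn_{\mypath}[s,j]$ and $i' := \MaxOut_{\mypath}[t,k]$, and observe that $i \leq i'$. By the definitions there is a path of at most $(1+\eps)^j$ hops from $s$ to $q_i$ and one of at most $(1+\eps)^k$ hops from $q_{i'}$ to $t$. Concatenating them via the transitive arc $q_i \to q_{i'}$ inside $\mypath$ (or directly when $i=i'$) gives a valid walk that enters $\mypath$ at $q_i$ and leaves at $q_{i'}$, with entry-exit hop count at most $(1+\eps)^j + (1+\eps)^k$. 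This witnesses $d_{\mypath}(s,t) \leq d^*_{\mypath}(s,t)$. The infinite case follows because $d_{\mypath}(s,t) = \infty$ exactly when no pair $(j,k)$ is feasible.

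For $d^*_{\mypath}(s,t) < (1+\eps) d_{\mypath}(s,t) + 1$, I would start from an optimal pair of indices $i \leq i'$ realizing $d_{\mypath}(s,t) = a+b$ with $a := \dhop(s, q_i)$ and $b := \dhop(q_{i'}, t)$, and round each of $a,b$ up to the next power of $1+\eps$: take $j$ to be the smallest integer with $(1+\eps)^j \geq a$ (using $j = -1$ when $a = 0$), and $k$ analogously. By construction $\MinIn_{\mypath}[s,j] \leq i \leq i' \leq \MaxOut_{\mypath}[t,k]$, so the pair is feasible and $d^*_{\mypath}(s,t) \leq (1+\eps)^j + (1+\eps)^k$. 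A short case analysis on whether $a,b$ are zero or at least $1$---using $(1+\eps)^{j-1} < a$ by minimality of $j$ when $a \geq 1$, and $(1+\eps)^{-1} < 1$ when $a = 0$---then yields the strict inequality $(1+\eps)^j + (1+\eps)^k < (1+\eps)(a+b) + 1$.

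The only delicate point is the treatment of zero-hop subpaths, namely the case $s \in \mypath$ or $t \in \mypath$ so that $a$ or $b$ equals $0$. This is exactly why the oracle explicitly includes the index $j = -1$: its bound $(1+\eps)^{-1} < 1$ represents a hop count of zero while remaining strictly positive. The multiplicative factor $(1+\eps)$ cannot absorb such a positive constant when the actual distance is $0$, and this is precisely the slack that forces the additive $+1$ in the approximation guarantee. Everything else reduces to a routine geometric-progression estimate.
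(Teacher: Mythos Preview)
Your approach mirrors the paper's almost exactly: both directions are argued the same way---turn a feasible pair $(j,k)$ into an actual $s$--$t$ walk through~$\mypath$ for the lower bound, and round the two legs of an optimal via-$\mypath$ path up to the nearest powers of $1+\eps$ for the upper bound. Your explicit case analysis for zero-hop legs ($a=0$ or $b=0$, handled via $j=-1$) is in fact more careful than the paper, which simply writes $d_{\mypath}(s,t) > (1+\eps)^{j^*-1}+(1+\eps)^{k^*-1}$ without commenting on what happens when $j^*=-1$ or $k^*=-1$.

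There is, however, a definitional slip in your lower-bound argument. The paper takes $d_{\mypath}(s,t)$ to be the length of a shortest $s$--$t$ path through a \emph{single} point $q_{i^*}\in\mypath$, namely $\min_i\bigl(\dhop(s,q_i)+\dhop(q_i,t)\bigr)$; you instead work with an optimal \emph{pair} $i\le i'$ and set $d_{\mypath}(s,t)=\dhop(s,q_i)+\dhop(q_{i'},t)$, an ``entry--exit'' count that does not charge for the internal transitive arc $q_i\to q_{i'}$. Under the paper's definition the walk you build has length up to $(1+\eps)^j+(1+\eps)^k+1$, not $(1+\eps)^j+(1+\eps)^k$; this extra~$+1$ is precisely the content of the paper's displayed inequality~(\ref{eq:jk}). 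The paper is itself loose at the same spot---it claims that~(\ref{eq:jk}) ``immediately implies'' $d_{\mypath}\le d^*_{\mypath}$, which strictly speaking only yields $d_{\mypath}\le d^*_{\mypath}+1$---so the discrepancy is harmless for the downstream oracle, but you should be aware that your two-index formulation of $d_{\mypath}$ differs from the paper's single-index one by exactly the hop spent on the transitive arc inside~$\mypath$.
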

\begin{proof}
First note that for any pair $j,k$ with $\MinIn_{\mypath}[s,j] \leq \MaxOut_{\mypath}[t,k]$,
there is an arc from $q_{a}$ to $q_b$ for $a:=\MinIn_{\mypath}[s,j]$ and $b:=\MaxOut_{\mypath}[t,k]$.
Hence,
\begin{equation} \label{eq:jk}
\mbox{if} \hspace*{5mm} \MinIn_{\mypath}[s,j] \leq \MaxOut_{\mypath}[t,k] \hspace*{5mm} \mbox{then} \hspace*{5mm} d_{\mypath}(s,t) \leq (1+\eps)^j + (1+\eps)^k +1.
\end{equation}

To prove the first part of the lemma, suppose $d_{\mypath}(s,t)=\infty$.
Then we must also have $d^*_{\mypath}(s,t) = \infty$, as claimed.
Indeed, if $d^*_{\mypath}(s,t) < \infty$ there are $j,k$ 
with $\MinIn_{\mypath}[s,j] \leq \MaxOut_{\mypath}[t,k]$ which,
together with~(\ref{eq:jk}), would contradict~$d_{\mypath}(s,t)=\infty$. 

To prove the second part of the lemma, suppose $d_{\mypath}(s,t)\neq\infty$. Then~(\ref{eq:jk}) and the definition
of $d^*_{\mypath}(s,t)$ immediately imply that $d_{\mypath}(s,t) \leq d^*_{\mypath}(s,t)$.
To prove $d^*_{\mypath}(s,t) \leq (1+\eps) \cdot d_{\mypath}(s,t)+1$, let
$q_{i^*}\in\mypath$ be a point such that $d_{\mypath}(s,t) = \dhop(s,q_{i^*})+\dhop(q_{i^*},t)$.
Define
\[
j^* := \min \big\{ j : -1\leq j\leq \ceil{\log_{1+\eps} n} \mbox{ and } \dhop(s,q_{i^*}) \leq (1+\eps)^{j} \big\}
\]
and
\[
k^* := \min \big\{ k : -1\leq k\leq \ceil{\log_{1+\eps} n} \mbox{ and } \dhop(s,q_{i^*}) \leq (1+\eps)^{k} \big\}.
\]
Then 
\[
\MinIn_{\mypath}[s,j^*] \leq i^* \leq \MaxOut_{\mypath}[t,k^*]
\]
and so $d^*_{\mypath}(s,t) \leq  (1+\eps)^{j^*} + (1+\eps)^{k^*} + 1$ by~(\ref{eq:jk}). Moreover,
\[
d_{\mypath}(s,t) = \dhop(s,q_{i^*})+\dhop(q_{i^*},t) > (1+\eps)^{j^*-1} + (1+\eps)^{k^*-1}.
\]
Hence,
\[
d^*_{\mypath}(s,t) \leq  (1+\eps)^{j^*} + (1+\eps)^{k^*} + 1
                   < (1+\eps) \cdot d_{\mypath}(s,t)+1,
\]
thus finishing the proof. 
\end{proof}
The following lemma summarizes the performance of our via-path approximate distance
oracle.
\begin{lemma}\label{lem:distance-vp-oracle-performance}
Let $\tg(P)$ be a transmission graph on $n$ points, and let $\mypath$ be  transitive path in $\tg(P)$.
Then there is an oracle that uses $O((1/\eps)\cdot n\log n)$ storage such that, for two query points
$s,t\in P$ we can compute the value $d^*_{\mypath}(s,t)$ from Lemma~\ref{lem:distance-vp-oracle}
in $O((1/\eps)\log n)$ time.
\end{lemma}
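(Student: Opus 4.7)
The plan is to store the arrays $\MinIn_{\mypath}[\cdot,\cdot]$ and $\MaxOut_{\mypath}[\cdot,\cdot]$ explicitly and then evaluate $d^*_{\mypath}(s,t)$ by a two-pointer sweep that exploits a monotonicity of these arrays. The storage bound is immediate: the second index~$j$ ranges over $\{-1,0,1,\ldots,\ceil{\log_{1+\eps} n}\}$, which has $O((1/\eps)\log n)$ entries, and the first index ranges over the~$n$ points of~$P$, so the two arrays together take $O((1/\eps)\cdot n\log n)$ space.

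For the query time, I first observe a monotonicity. For a fixed point~$s$, the quantity $\MinIn_{\mypath}[s,j]$ is non-increasing in~$j$, since enlarging the hop budget $(1+\eps)^j$ can only add indices to the set whose minimum is taken; symmetrically, $\MaxOut_{\mypath}[t,k]$ is non-decreasing in~$k$. Consequently, for every~$j$ the set of $k$ satisfying $\MinIn_{\mypath}[s,j] \leq \MaxOut_{\mypath}[t,k]$ is a (possibly empty) suffix $[k^*(j),\ceil{\log_{1+\eps} n}]$, and the function $j\mapsto k^*(j)$ is itself non-increasing in~$j$. For each valid~$j$ the objective $(1+\eps)^j+(1+\eps)^k$ is minimized at $k=k^*(j)$, so the minimum defining $d^*_{\mypath}(s,t)$ is attained on the ``staircase'' $\{(j,k^*(j))\}$; if no valid pair exists the minimum is~$\infty$, which matches the $d^*_{\mypath}=\infty$ branch of Lemma~\ref{lem:distance-vp-oracle}.

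To evaluate this minimum, I sweep $j$ from $-1$ upward to $\ceil{\log_{1+\eps} n}$ while maintaining a pointer~$k$ initialized to $\ceil{\log_{1+\eps} n}$. For each~$j$, I decrement $k$ as long as the constraint $\MinIn_{\mypath}[s,j] \leq \MaxOut_{\mypath}[t,k-1]$ still holds, and then, if $\MinIn_{\mypath}[s,j] \leq \MaxOut_{\mypath}[t,k]$, I update the running minimum with $(1+\eps)^j+(1+\eps)^k$. By the monotonicity of~$k^*$ the pointer~$k$ only moves downward during the entire sweep, so the $j$ loop and the total work of the $k$ pointer are each $O((1/\eps)\log n)$, giving the claimed query time. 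The main conceptual step is just spotting the two monotonicities and the two-pointer implementation; the only mildly delicate point is the $j=-1$ row, where $(1+\eps)^{-1}<1$ together with the integrality of $\dhop$ forces the constraint $\dhop(p,q_i)\leq(1+\eps)^{-1}$ to mean $p=q_i$, so this row simply records the index of~$p$ on $\mypath$ (or remains $\pm\infty$ if $p\notin\mypath$) and is handled uniformly by the same sweep.
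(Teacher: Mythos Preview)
Your proof is correct and follows essentially the same approach as the paper: store the two arrays (giving the $O((1/\eps)\,n\log n)$ storage bound by the same count), observe that $\MinIn_{\mypath}[s,j]$ is non-increasing in~$j$ and $\MaxOut_{\mypath}[t,k]$ is non-decreasing in~$k$, and use a two-pointer scan over the $O((1/\eps)\log n)$ indices to find the optimal pair. The only cosmetic difference is the scan direction---you increase~$j$ while decreasing~$k$, whereas the paper decreases~$j$ while increasing~$k$---but this is the same argument.
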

\begin{proof}
The storage needed for the arrays $\MinIn$ and $\MaxOut$ is $O(n \log_{1+\eps} n)$.
Since $\log_{1+\eps}n = (\log n)/\log (1+\eps)= O((1/\eps)\log n)$ this proves the
bound on the amount of storage of our oracles. 

To compute $d^*_{\mypath}(s,t)$ when answering a query,
we note that the values $\MinIn[s,j]$ are non-increasing as~$j$ increases, and the values
$\MaxOut[t,k]$ are non-decreasing as~$k$ increases. Hence, we can find 
$d^*_{\mypath}(s,t) = \min \{ (1+\eps)^j + (1+\eps)^k : \MinIn_{\mypath}[s,j] \leq \MaxOut_{\mypath}[t,k]  \}$
by scanning the rows $\MinIn[s,\cdot]$ and $\MaxOut[t,\cdot]$, as follows.
First, for $j=\ceil{\log_{1+\eps} n}$ we find the smallest $k$ such that
$\MinIn_{\mypath}[s,j] \leq \MaxOut_{\mypath}[t,k]$. If no such $k$ exists  
then $d^*_{\mypath}(s,t)=\infty$ and we are done. Otherwise we decrease $j$ by~1
and increase~$k$ until we again have $\MinIn_{\mypath}[s,j] \leq \MaxOut_{\mypath}[t,k]$.
We continue scanning $\MinIn[s,\cdot]$ and $\MaxOut[t,\cdot]$ in opposite directions,
until we have found for each $j$ the smallest $k$ such that $\MinIn_{\mypath}[s,j] \leq \MaxOut_{\mypath}[t,k]$.
One of the pairs $j,k$ thus found must be the pair defining~$d^*_{\mypath}(s,t)$.
Thus the query time is linear in the length of the rows $\MinIn[s,\cdot]$ and $\MaxOut[t,\cdot]$,
which is $O(\log_{1+\eps} n)= O((1/\eps)\log n)$.
\end{proof}
The via-path oracle for approximate distance queries can be extended to a via-clique
oracle, in exactly the same way as was done for reachability queries: cover each clique
by $O(1)$ stabbed cliques, cover each stabbed clique by at most six (transitive) paths,
and construct a via-path oracle for each of them.
These via-clique oracles can then be plugged into the separator approach, again
exactly as before. 

To answer a query with query points $s,t$ on our complete oracle, we first 
compute an approximate hop-distance from $s$ to $t$ via one of the cliques in the clique-based
separator. If $s$ and $t$ are in the same part of the partition, we then also
recursively approximate the hop-distance within that part. Finally, we return
the minimum of the distances found.
\medskip

The analysis of the amount of storage and query time is as before; the only difference is 
that the via-clique oracle has an extra factor $O((1/\eps)\log n)$ in the storage
and query time, which carries over to the final bounds. 
Unfortunately, the preprocessing time does not carry over. The reason
is that we cannot afford to work with a spanner. It is possible to do 
a BFS on $\tg(P)$ efficiently, as demonstrated by Kaplan~\etal~\cite{DBLP:journals/siamcomp/KaplanMRS18}
and by An and Oh~\cite{An-Oh-Algorithmica}, but it is not quite clear how to
integrate this into the preprocessing algorithm. Moreover, 
we also need to do a BFS on the ``reversed'' transmission graph, which makes
things even more difficult.
We therefore revert to simply solving the All-Pairs Shortest Path problem (APSP)
in a preprocessing step, thus giving all pairwise distances. Once these distances
are available, the construction of the oracle can be done within
the same time bounds as before.
Solving APSP in directed unweighted graphs can be done in $O(n^{2.5302})$ time~\cite{DBLP:conf/focs/Gall12}.
We can improve the preprocessing time by observing that, since we are
approximating the hop-distance anyway, we may as well solve APSP approximately.
Indeed, if we work with $(1+\eps)$-approximations of the distances,
then the total (multiplicative) approximation factor of our oracle is bounded by $(1+\eps)^2<1+3\eps$.
Hence, by working with parameter~$\eps':=\eps/3$, we can get a $(1+\eps')$-approximation, for any $\eps'>0$.
Galil and Margalit~\cite{DBLP:journals/iandc/GalilM97} (see also the paper by Zwick~\cite{DBLP:journals/jacm/Zwick02}) have shown that
one can compute a $(1+\eps)$-approximation of all pairwise distances 
in $\tilde{O}(n^{\omega}/\eps)$ time,\footnote{The notation $\tilde{O}$ hides $O(\polylog n)$ factors.}
where $\omega < 2.373$ is the matrix-multiplication exponent~\cite{DBLP:conf/soda/AlmanW21}.
\medskip

Putting everything together, we obtain the following theorem.
\begin{theorem}\label{thm:distance-oracle}
Let $P$ be a set of $n$ points in~$\Reals^2$. Then there is an oracle for approximate
distance queries that uses $O((n/\eps)\sqrt{n}\log n)$ storage and that, given two query 
points~$s,t\in P$ can determine if $s$ can reach $t$. If so, it can report a value $d^*_{\mypath}(s,t)$ with
$d_{\mypath}(s,t) \leq d^*_{\mypath}(s,t) \leq (1+\eps) \cdot d_{\mypath}(s,t)+1$
in $O((\sqrt{n}/\eps)\log n)$ time. The oracle can be constructed in $\tilde{O}(n^{\omega}/\eps)$ time,
where $\omega < 2.373$ is the matrix-multiplication exponent.
\end{theorem}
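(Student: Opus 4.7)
The plan is to mirror the recursive construction used in Theorem~\ref{thm:reachability-oracle}, but replace the reachability via-clique oracle with an approximate-distance via-clique oracle built from Lemma~\ref{lem:distance-vp-oracle-performance}. First I would construct, by Theorem~\ref{thm:separator-construction}, a clique-based separator~$\sep$ of $\ig(\D_P)$ consisting of $O(\sqrt{n})$ cliques, and let $P_A,P_B$ be the two parts of the induced partition of~$P$, with $|P_A|,|P_B|\leq \delta n$. For every clique $C\in\sep$, I would decompose $P(C)$ into $O(1)$ stabbed cliques and, via Lemma~\ref{lem:cover-clique-by-paths}, further into $O(1)$ transitive paths, and build a via-path approximate-distance oracle on each of them using Lemma~\ref{lem:distance-vp-oracle-performance}. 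The collection of these oracles forms a via-clique approximate-distance oracle of storage $O((|C|/\eps)\log n)$ and query time $O((1/\eps)\log n)$; by Lemma~\ref{lem:distance-vp-oracle} the returned value approximates $d_C(s,t)$, the hop-length of the best $s\!\to\!t$ path through $P(C)$, up to factor $(1+\eps)$ plus an additive~$1$. Then recurse on $P_A$ and $P_B$.

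To answer a query $(s,t)$, I would query every via-clique oracle for $C\in\sep$, obtaining approximate via-clique distances, and if $s,t$ lie in the same part recurse into that part; finally return the minimum of all values found (or $\infty$ if no finite value occurs). Correctness follows because any $s\!\to\!t$ path in $\tg(P)$ either stays inside the current part or must use a vertex of some clique in~$\sep$, since $\sep$ separates $P_A$ from~$P_B$. The multiplicative $(1+\eps)$ factor and the additive~$+1$ propagate through the $\min$ without compounding, because the shortest path is realized either entirely on one side or it passes through a specific clique, and in each case a single application of Lemma~\ref{lem:distance-vp-oracle} provides the required bound.

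For the storage and query recurrences, the separator layer contributes $O((n/\eps)\sqrt{n}\log n)$ to storage and $O((\sqrt{n}/\eps)\log n)$ to query time; balancedness of~$\sep$ combined with $M(n)=O((n/\eps)\sqrt{n}\log n)+M(n_1)+M(n_2)$ with $n_1+n_2\leq n$ and $n_1,n_2\leq\delta n$ yields $M(n)=O((n/\eps)\sqrt{n}\log n)$, and $Q(n)=O((\sqrt{n}/\eps)\log n)+Q(n_1)$ yields $Q(n)=O((\sqrt{n}/\eps)\log n)$, as in Theorem~\ref{thm:reachability-oracle}.

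The main obstacle, already flagged in the text, is preprocessing: building the $\MinIn$ and $\MaxOut$ arrays requires knowing $\dhop$ between every point in the current recursion and every path vertex, and a BFS-based shortcut in the spirit of Kaplan~\etal\ does not obviously compose with recursion (in particular one also needs distances in the reverse transmission graph). My plan is to sidestep this by precomputing once, at the top level, $(1+\eps')$-approximations to \emph{all} pairwise hop-distances in $\tg(P)$ and in its reverse, using the Galil--Margalit / Zwick algorithm in $\tilde O(n^\omega/\eps')$ time with $\eps':=\eps/3$; the overall approximation factor then becomes $(1+\eps')^2(d+1/(1+\eps'))\le (1+\eps)d+1$ after a suitable bookkeeping of the additive~$1$. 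Once these distances are available, the $\MinIn_\mypath$ and $\MaxOut_\mypath$ arrays at every recursion level can be filled in $O((n/\eps)\log n)$ time per path, so that the preprocessing recurrence matches the storage recurrence and is dominated by the initial approximate APSP, giving total preprocessing $\tilde O(n^\omega/\eps)$ as claimed.
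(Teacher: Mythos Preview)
Your proposal is correct and follows essentially the same route as the paper: extend the via-path oracle of Lemma~\ref{lem:distance-vp-oracle-performance} to a via-clique oracle via Lemma~\ref{lem:cover-clique-by-paths}, plug it into the clique-based separator recursion, and handle preprocessing by a single top-level approximate APSP computation (Galil--Margalit/Zwick) with $\eps'=\eps/3$. One small slip: the per-clique storage is $O((n/\eps)\log n)$, not $O((|C|/\eps)\log n)$, since the $\MinIn$/$\MaxOut$ arrays are indexed by all of~$P$; your separator-layer total $O((n/\eps)\sqrt{n}\log n)$ is nonetheless correct because there are $O(\sqrt{n})$ cliques.
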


\subparagraph{Extension to continuous reachability queries.}
In a continuous reachability query, we are given a query pair $s,t\in P\times \Reals^2$
and we wish to decide if $s$ can reach~$t$, that is, we wish to decide if there is point $q\in P$
such that $s\reaches q$ and $|qt|\leq r(q)$. We define the hop-distance from $s$ to $t$,
denoted by $\dhop(s,t)$, as
\[
\dhop(s,t) = \min \{ \dhop(s,q) + 1 : q\in P \mbox{ and } |qt| \leq r(q) \}.
\]
Next we present a data structure that, given a query target point~$t\in\Reals^2$,
determines a set $Q(t)$ of at most six points that can serve as the last point
on a path from any source point $s$ to $t$. To this end,
let $\Gamma_1(t),\ldots,\Gamma_6(t)$ be the six canonical cones of~$t$.
(See Fig.~\ref{fig:cones}(i) for an illustration of the concept of canonical cones.)
For each cone~$\Gamma_i(t)$, let $Q_i(t)\subseteq P$ be the set of points in $\Gamma_i(t)$ whose transmission
disk contains~$t$. (If a point $p$ with $t\in D_p$ lies on the boundary
between two cones, we assign it to one of them arbitrarily.)
For $Q_i(t)\neq\emptyset$, 
let $q_i(t)$ be a point in $Q_i(t)$ of minimum radius. Define $Q(t)$ to be set
of at most six points $q_i(t)$ selected in this manner. 
\begin{observation}\label{obs:Qt}
Let~$s\in P$ be a point that can reach $t$ and let ${d^*(s,t) := \min_{q\in Q(t)}\dhop(s,q)+1}$. Then
$\dhop(s,t) \leq d^*(s,t) \leq \dhop(s,t)+1$.
\end{observation}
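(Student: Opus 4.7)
\begin{proofof}{Observation~\ref{obs:Qt} (plan)}
The plan is to handle the two inequalities separately. The lower bound $\dhop(s,t)\leq d^*(s,t)$ is essentially by definition: every point $q\in Q(t)$ is selected from some~$Q_i(t)$, so $t\in D_q$ and hence $q$ is one of the candidates in the defining minimum of $\dhop(s,t)=\min\{\dhop(s,q)+1:q\in P,\ |qt|\leq r(q)\}$. Therefore $\dhop(s,t)\leq \dhop(s,q)+1$ for every $q\in Q(t)$, and taking the minimum over $q\in Q(t)$ gives the first inequality.

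For the upper bound $d^*(s,t)\leq \dhop(s,t)+1$, the key step is to exhibit a single point of $Q(t)$ that is reachable from $s$ in at most $\dhop(s,t)$ hops. Let $p\in P$ be a point that attains the minimum in the definition of $\dhop(s,t)$, so $t\in D_p$ and $\dhop(s,p)=\dhop(s,t)-1$. Let $\Gamma_i(t)$ be a canonical cone of $t$ that contains~$p$. Then $p\in Q_i(t)$, so $Q_i(t)\neq\emptyset$ and the representative~$q_i(t)\in Q(t)$ exists.

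The main technical step, and really the only nontrivial point, is to show that there is an arc from $p$ to~$q_i(t)$ in $\tg(P)$. This is the standard $60^\circ$-cone argument already used in Figure~\ref{fig:cones}(ii), applied at~$t$ instead of at~$x_C$: since $p,q_i(t)\in \Gamma_i(t)$ and the cone has opening angle~$60^\circ$, the angle $\angle p\, t\, q_i(t)$ is at most~$60^\circ$, so
\[
|p\, q_i(t)|\ \leq\ \max(|p\, t|,\,|q_i(t)\, t|)\ \leq\ \max(r(p),\,r(q_i(t)))\ =\ r(p),
\]
where the last equality uses that $q_i(t)$ was chosen as a point of \emph{minimum} radius in $Q_i(t)$, so $r(q_i(t))\leq r(p)$. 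Hence $(p,q_i(t))$ is an arc of $\tg(P)$.

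Combining, $\dhop(s,q_i(t))\leq \dhop(s,p)+1=\dhop(s,t)$, and therefore
\[
d^*(s,t)\ \leq\ \dhop(s,q_i(t))+1\ \leq\ \dhop(s,t)+1,
\]
which finishes the proof. The whole argument is short; the only place to be a bit careful is to apply the minimum-radius choice for $q_i(t)$ in the right direction so that the cone-length inequality yields an arc from~$p$ into~$q_i(t)$ (and not the other way round, which would not help).
\end{proofof}
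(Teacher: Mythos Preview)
Your proof is correct and follows essentially the same approach as the paper: both take the last point on a shortest $s$--$t$ path (your $p$, the paper's $q^*$), locate the canonical cone of~$t$ containing it, and use the $60^\circ$-cone inequality together with the minimum-radius choice of $q_i(t)$ to produce the arc $(p,q_i(t))$, yielding $d^*(s,t)\leq\dhop(s,t)+1$; the lower bound is handled identically in both. Your write-up is in fact slightly more explicit about why the cone argument gives an arc in the correct direction.
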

\begin{proof}
Suppose $s$ can reach $t$, and let $q^*\in P$ be the point immediately preceding~$t$
on a shortest path from $s$ to~$t$. Let $i$ be such that $q^*\in Q_i(t)$, and
let $q_i(t) \in Q_i(t)$ be the minimum-radius point that was put into~$Q(t)$. 
Then $q^*$ and $q_i(t)$ lie in the same 60-degree cone~$\Gamma_i(t)$
and $r(q_i(t)) \leq r(q^*)$. Moreover, $|q^*t|\leq r(q^*)$ and  $|q_i(t)t|\leq r(q_i(t))$.
Hence, $(q^*,q_i(t))$ is an arc in~$\tg(P)$---this was also the key property
underlying the via-path oracle---and so $\dhop(s,q_i(t)) \leq  \dhop(s,q^*) + 1$. This implies 
\[
d^*(s,t) \leq \dhop(s,q_i(t)) + 1  \leq  \dhop(s,q^*) + 2  = \dhop(s,t) +1.
\]
On the other hand, any point $q\in Q(t)$ can reach $t$ with one hop, so
\[
\dhop(s,t) \leq  \min_{q\in Q(t)}\dhop(s,q) +1 =d^*(s,t).
\]
\end{proof}
We now describe a data structure that, given a query point~$t$, 
computes the point~$q_i(t)$ mentioned in Observation~\ref{obs:Qt} (if it exists),
for a fixed index~$i$.
We will construct such a data structure for each $1\leq i \leq 6$, and by querying each
of these six structures we can compute the set $Q(t)$.

Consider a fixed index~$i$, that is, consider the canonical cone of a fixed orientation.
The structure for this fixed index~$i$ is defined as follows. 
For a point $p\in P$, let $\overline{\Gamma}_i(p)$ be the canonical cone opposite $\Gamma_i(p)$---for instance,
$\overline{\Gamma}_1(p) = \Gamma_4(p)$---and define $D^{(i)}_p := D_p \cap \overline{\Gamma}_i(p)$.
Then $p\in Q_i(t) \mbox{ if and only if } t\in  D^{(i)}_p$; see Fig.~\ref{fig:continuous-query}(i).
\begin{figure}
\begin{center}
\includegraphics{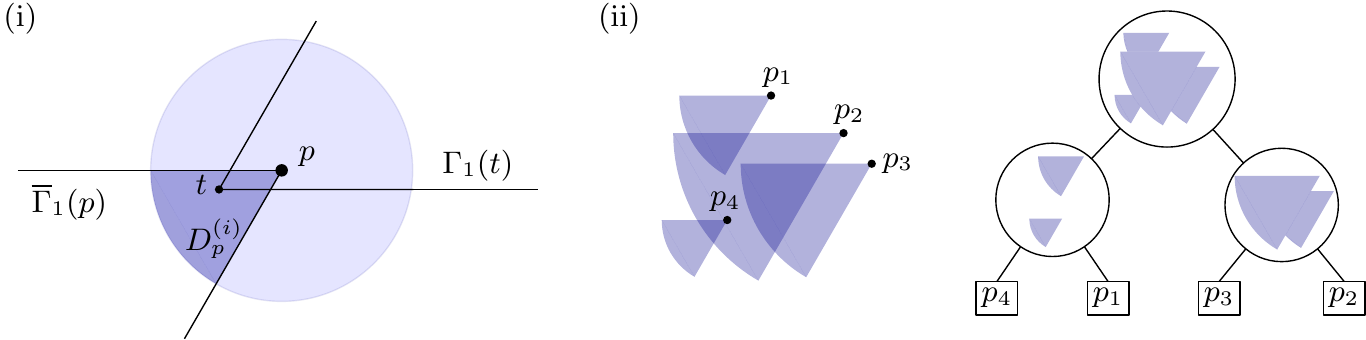}
\end{center}
\caption{(i) Illustration of the fact that $p\in Q_i(t)$ if and only if $t\in D_p^{(i)}$.
         (ii) The set $\D^*$ and the corresponding search structure.}
\label{fig:continuous-query}
\end{figure}
Thus we need a data structure on the set $\D^* := \{ D^{(i)}_p : p\in P\}$
that, given a query point~$t\in\Reals^2$, can quickly find a point
$p\in P$ with minimum transmission radius such that $t\in  D^{(i)}_p$.
To this end we construct a balanced binary tree~$\T_i$, as follows;
see Fig.~\ref{fig:continuous-query}(ii).
\begin{itemize}
\item The leaves of $\T_i$ store the points $p\in P$ in left-to-right order according 
      to their transmission radii. For example, the leftmost leaf stores a point $p\in P$ of 
      minimum transmission radius and the rightmost leaf stores a point $p\in P$ of 
      maximum transmission radius.
\item Let $P(v)$ denote the set of points stored in the subtree rooted at~$v$.
     Then $v$ stores the union $\U(v) := \bigcup_{p\in P(v)} D^{(i)}_p$, preprocessed such
     that for a query point $t\in \Reals^2$ we can decide if $t\in \U(v)$ in $O(\log |\U(v)|)$ time. 
     Using a standard point-location data structure~\cite{DBLP:reference/cg/Snoeyink04}
     this associated structure needs $(|\U(v)|)$ storage.     
\end{itemize}
A query with point~$t\in\Reals^2$ to find $q_i(t)$ is answered as follows. 
If $t\not\in \U(\myroot(\T_i))$ then $Q_i(t)=\emptyset$ and we are done. Otherwise
we descend in~$\T_i$, proceeding to the left child $v_{\mathrm{left}}$
of the current node~$v$ if $t\in \U(v_{\mathrm{left}})$, and proceeding to the right child of~$v$
otherwise. We then report the point stored in the leaf where the search ends.
It is easy to see that this correctly answers the query. 
This leads to the following theorem.
\begin{theorem}
Let $P$ be a set of $n$ points in $\Reals^2$. There is a data structure of size $O(n\log n)$
that, for any query point $t\in\Reals^2$, can find in $O(\log^2 n)$ time a set $Q(t)\subset P$ 
of at most six points with the following property:  for any point~$s\in P$ 
 we have $\dhop(s,t) \leq d^*(s,t) \leq \dhop(s,t)+1$,
where ${d^*(s,t) := \min_{q\in Q(t)}\dhop(s,q)+1}$. 
\end{theorem}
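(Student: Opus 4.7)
The plan is to split the proof into three pieces: correctness of the query procedure, the storage bound, and the query time. For correctness I would first show that a single query on one tree $\T_i$ returns $q_i(t)$. During the descent, I maintain the invariant that whenever $Q_i(t)\cap P(v)\neq\emptyset$, the element of $Q_i(t)\cap P(v)$ of smallest transmission radius is stored in the subtree rooted at~$v$. The branching test $t\in \U(v_{\mathrm{left}})$ is precisely the test ``$Q_i(t)\cap P(v_{\mathrm{left}})\neq\emptyset$'', since $p\in Q_i(t)$ iff $t\in D_p^{(i)}$; because the leaves are ordered by increasing transmission radius, going left whenever possible drives the descent exactly to $q_i(t)$. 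Executing this on all six trees produces $Q(t)$, and Observation~\ref{obs:Qt} then yields $\dhop(s,t)\leq d^*(s,t)\leq \dhop(s,t)+1$ for every $s\in P$.

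For the storage, the key task is to bound $\sum_{v\in\T_i}|\U(v)|$. Each region $D_p^{(i)}$ is the intersection of a disk with a 60-degree cone anchored at~$p$, hence it is a constant-complexity, convex, $\alpha$-fat pie slice; the pie slices stored in a single tree all share the same apex angle and orientation. I would invoke a linear union-complexity bound for such fat convex regions of common shape to conclude that $|\U(v)|=O(|P(v)|)$. Since a standard planar point-location structure on $\U(v)$ uses $O(|\U(v)|)$ storage and answers queries in $O(\log|\U(v)|)$ time, the total storage of $\T_i$ is $\sum_v O(|P(v)|)=O(n\log n)$, and summing over the six trees preserves the $O(n\log n)$ bound.

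For the query time, a query traces a single root-to-leaf path of length $O(\log n)$ in $\T_i$, spending $O(\log|\U(v)|)=O(\log n)$ at each visited node for the planar point-location test, giving $O(\log^2 n)$ per tree and $O(\log^2 n)$ overall. The main obstacle will be invoking the linear union-complexity bound for the pie slices cleanly: any weaker bound such as $O(k\alpha(k))$ or $O(k\log^* k)$ would inflate the storage by an additional factor and spoil the stated $O(n\log n)$ space. Fortunately, the pie slices in a single tree have a common fixed apex angle and orientation and differ only in apex and scale, so they behave essentially like disks, for which linear union complexity is classical; this is the piece of the plan that most needs careful referencing.
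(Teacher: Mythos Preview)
Your proposal is correct and follows essentially the same route as the paper: correctness via the descent invariant plus Observation~\ref{obs:Qt}, storage via $\sum_v |\U(v)|$ with a linear union-complexity bound, and query time via $O(\log n)$ point-location cost along a path of depth $O(\log n)$. The only sharpening worth mentioning is that your worry about the union-complexity reference has a clean resolution: since all the sectors $D_p^{(i)}$ in a fixed tree $\T_i$ share the same apex angle and orientation and differ only in position and scale, they are \emph{homothets} of a single convex shape, and the union of $k$ homothets of a convex region has complexity $O(k)$; the paper invokes exactly this result rather than a generic fat-objects bound, which (as you rightly note) would only give a near-linear bound and inflate the storage.
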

\begin{proof} 
Observation~\ref{obs:Qt} states that the set $Q(t)$ defined earlier has the desired properties.
It follows from the discussion above that we can find $Q(t)$ by querying 
the six structures $\T_i$ as described above. It remains to argue that each $\T_i$
uses $O(n\log n)$ storage and has $O(\log^2 n)$ query time.

The total amount of storage is $\sum_{v\in\T_i} O(u(v))$, where $u(v)$ is the complexity of the union~$\U(v)$. 
Recall that each $D^{(i)}_p$ is a 60-degree sector of a disk, and that all these sectors
have exactly the same orientation. Hence, $\U(v)$ is the union of a set of 
\emph{homothets}---they are scaled and translated copies of each other---and 
so their union complexity is linear~\cite{state-of-union}.
Thus $u(v) =O(|P(v)|)$, from which it follows by standard arguments 
that the total amount of storage is $O(n\log n)$.
The query time is $O(\log^2 n)$, since we spend $O(\log n)$ per node as we descend~$\T_i$
and the depth of $\T_i$ is $O(\log n)$. 
\end{proof}
\begin{corollary}
Let $P$ be a set of $n$ points in~$\Reals^2$, each with an associated transmission radius,
and let $\tg(P)$ be the corresponding the transmission graph.
\begin{enumerate}[(i)]
\item There is an oracle for continuous reachability queries in~$\tg(P)$
      that has $O(\sqrt{n})$ query time and uses $O(n\sqrt{n})$ storage. 
\item There is an oracle for continuous approximate-distance queries in~$\tg(P)$
       that uses $O((n/\eps)\sqrt{n}\log n)$ storage and that, given query 
       points~$s\in P$ and $t\in\Reals^2$ can report a value $d^*_{\mypath}(s,t)$ with
       $d_{\mypath}(s,t) \leq d^*_{\mypath}(s,t) \leq (1+\eps) \cdot d_{\mypath}(s,t)+2$ 
       in $O((\sqrt{n}/\eps)\log n)$ time.
\end{enumerate}
\end{corollary}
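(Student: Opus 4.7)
The plan is to combine three ingredients already available: the $O(n\log n)$-size, $O(\log^2 n)$-query continuous query structure from the preceding theorem (which returns, for a target $t\in\Reals^2$, a set $Q(t)\subset P$ of at most six points), the reachability oracle of Theorem~\ref{thm:reachability-oracle}, and the approximate distance oracle of Theorem~\ref{thm:distance-oracle}. Storing all relevant structures side by side gives $O(n\sqrt{n}) + O(n\log n) = O(n\sqrt{n})$ total storage for~(i) and $O((n/\eps)\sqrt{n}\log n) + O(n\log n) = O((n/\eps)\sqrt{n}\log n)$ for~(ii), so the storage bounds follow directly.

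For part~(i), the first step is to argue the equivalence ``$s$ reaches $t$ continuously'' $\iff$ ``$s\reaches q$ in $\tg(P)$ for some $q\in Q(t)$''. The ``$\Leftarrow$'' direction is immediate since every $q\in Q(t)$ satisfies $|qt|\leq r(q)$, which supplies the final arc. For ``$\Rightarrow$'' I reuse the geometric argument from the proof of Observation~\ref{obs:Qt}: if $s$ reaches some $q^*\in P$ with $|q^*t|\leq r(q^*)$, then $q^*$ lies in some canonical cone $\Gamma_i(t)$, the representative $q_i(t)\in Q(t)$ in that cone satisfies $r(q_i(t))\leq r(q^*)$, and the 60-degree cone argument yields an arc $(q^*,q_i(t))$ in $\tg(P)$; hence $s\reaches q_i(t)\in Q(t)$. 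The query algorithm then computes $Q(t)$ in $O(\log^2 n)$ time and runs at most six reachability queries from Theorem~\ref{thm:reachability-oracle}, giving $O(\sqrt{n})$ total query time.

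For part~(ii), the algorithm will compute $Q(t)$ and then, for each $q\in Q(t)$, invoke the approximate distance oracle of Theorem~\ref{thm:distance-oracle} to obtain a value $\dhop^*(s,q)$ satisfying $\dhop(s,q)\leq \dhop^*(s,q) < (1+\eps)\dhop(s,q)+1$; the output is $\tilde d(s,t) := \min_{q\in Q(t)} \dhop^*(s,q) + 1$. The key verification---the one step I expect to require genuine bookkeeping---is chaining the additive slack introduced by passing through $Q(t)$ with the additive-plus-multiplicative slack of the distance oracle. Setting $M := \min_{q\in Q(t)}\dhop(s,q)$, Observation~\ref{obs:Qt} yields $\dhop(s,t)-1 \leq M \leq \dhop(s,t)$. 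The lower bound $\tilde d(s,t) \geq M+1 \geq \dhop(s,t)$ is immediate, while the upper bound follows from
\[
  \tilde d(s,t) \;\leq\; \dhop^*(s,q^+) + 1 \;<\; (1+\eps)M + 2 \;\leq\; (1+\eps)\dhop(s,t) + 2,
\]
where $q^+\in Q(t)$ achieves the minimum in the definition of $M$. This shows the additive slack composes cleanly to exactly $+2$, matching the stated bound. The query time is $O(\log^2 n) + 6\cdot O((\sqrt{n}/\eps)\log n) = O((\sqrt{n}/\eps)\log n)$, completing the plan.
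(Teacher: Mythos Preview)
Your proposal is correct and matches the paper's intended approach: the paper does not give an explicit proof of this corollary, treating it as an immediate consequence of the preceding theorem (which produces the set~$Q(t)$) combined with Theorems~\ref{thm:reachability-oracle} and~\ref{thm:distance-oracle}. Your write-up simply fills in the arithmetic the paper leaves implicit, and your chaining of the additive errors in part~(ii) is exactly the intended computation.
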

\section{Concluding remarks}
\label{sec:conclusion}
We presented oracles for reachability and approximate distance queries in transmission
graphs, whose performance does not depend on~$\Psi$, the ratio between the largest
and smallest transmission radius. The bounds we obtain are a significantly improvement
over the previously best known bounds by An and Oh~\cite{An-Oh-Algorithmica}. 
However, our bounds for reachability queries  are still quite far from what can be achieved when~$\Psi<\sqrt{3}$: 
we obtain $O(\sqrt{n})$ query time with $O(n\sqrt{n})$ storage, while for $\Psi<\sqrt{3}$ 
Kaplan~\etal~\cite{DBLP:journals/siamcomp/KaplanMRS18} obtain $O(1)$ query time using $O(n)$ storage.
They do this by reducing the problem to one on planar graphs. 
For $\Psi>\sqrt{3}$ such a reduction seems quite hard, if not impossible, and to make progress
a much deeper understanding of the structure of transmission graphs may be needed.
A first goal could be to improve on our bounds for small values of $\Psi$, say $\Psi=2$.
Lower bounds for unbounded~$\Psi$ would also be quite interesting.

\bibliography{references}

\begin{thebibliography}{10}

\bibitem{state-of-union}
Pankaj~K. Agarwal, Janos Pach, and Micha Sharir.
\newblock State of the union (of geometric objects): A review.
\newblock In {\em Computational Geometry: Twenty Years Later}, pages 9--48.
  American Mathematical Society, 2008.

\bibitem{DBLP:conf/soda/AlmanW21}
Josh Alman and Virginia~Vassilevska Williams.
\newblock A refined laser method and faster matrix multiplication.
\newblock In {\em Proc.~2021 {ACM-SIAM} Symposium on Discrete Algorithms
  ({SODA})}, pages 522--539, 2021.

\bibitem{An-Oh-Algorithmica}
Shinwoo An and Eunjin Oh.
\newblock Reachability problems for transmission graphs.
\newblock {\em Algorithmica}, 84:2820--2841, 2022.

\bibitem{DBLP:conf/esa/ArikatiCCDSZ96}
Srinivasa~Rao Arikati, Danny~Z. Chen, L.~Paul Chew, Gautam Das, Michiel H.~M.
  Smid, and Christos~D. Zaroliagis.
\newblock Planar spanners and approximate shortest path queries among obstacles
  in the plane.
\newblock In {\em Proc.~4th Annual European Symposium on Algorithms ({ESA})},
  volume 1136 of {\em Lecture Notes in Computer Science}, pages 514--528, 1996.

\bibitem{DBLP:conf/stoc/ChenX00}
Danny~Z. Chen and Jinhui Xu.
\newblock Shortest path queries in planar graphs.
\newblock In {\em Proc.~32nd Annual {ACM} Symposium on Theory of Computing
  ({STOC})}, pages 469--478, 2000.

\bibitem{bbkmz-ethf-20}
Mark de~Berg, Hans~L. Bodlaender, S{\'{a}}ndor Kisfaludi{-}Bak, D{\'{a}}niel
  Marx, and Tom~C. van~der Zanden.
\newblock A framework for {Exponential-Time-Hypothesis}-tight algorithms and
  lower bounds in geometric intersection graphs.
\newblock {\em SIAM J. Comput.}, 49:1291--1331, 2020.

\bibitem{Djidjev1996}
Hristo~N. Djidjev.
\newblock Efficient algorithms for shortest path queries in planar digraphs.
\newblock In {\em Proc.~22nd International Workshop on Graph-Theoretic Concepts
  in Computer Science}, volume 1197 of {\em Lecture Notes in Computer Science},
  1997.

\bibitem{DBLP:journals/siamcomp/Frederickson87}
Greg~N. Frederickson.
\newblock Fast algorithms for shortest paths in planar graphs, with
  applications.
\newblock {\em {SIAM} J. Comput.}, 16(6):1004--1022, 1987.

\bibitem{DBLP:journals/iandc/GalilM97}
Zvi Galil and Oded Margalit.
\newblock All pairs shortest distances for graphs with small integer length
  edges.
\newblock {\em Inf. Comput.}, 134(2):103--139, 1997.

\bibitem{DBLP:conf/focs/Gall12}
Fran{\c{c}}ois~Le Gall.
\newblock Faster algorithms for rectangular matrix multiplication.
\newblock In {\em Proc.~53rd Annual {IEEE} Symposium on Foundations of Computer
  Science ({FOCS})}, pages 514--523, 2012.

\bibitem{DBLP:conf/soda/GawrychowskiMWW18}
Pawel Gawrychowski, Shay Mozes, Oren Weimann, and Christian Wulff{-}Nilsen.
\newblock Better tradeoffs for exact distance oracles in planar graphs.
\newblock In {\em Proc.~29th Annual {ACM-SIAM} Symposium on Discrete Algorithms
  ({SODA})}, pages 515--529, 2018.

\bibitem{DBLP:journals/jacm/Har-PeledR15}
Sariel Har{-}Peled and Benjamin Raichel.
\newblock Net and prune: {A} linear time algorithm for euclidean distance
  problems.
\newblock {\em J. {ACM}}, 62(6):44:1--44:35, 2015.

\bibitem{DBLP:conf/focs/HolmRT15}
Jacob Holm, Eva Rotenberg, and Mikkel Thorup.
\newblock Planar reachability in linear space and constant time.
\newblock In {\em Proc.~56th {IEEE} Annual Symposium on Foundations of Computer
  Science ({FOCS})}, pages 370--389, 2015.

\bibitem{DBLP:journals/siamcomp/KaplanMRS18}
Haim Kaplan, Wolfgang Mulzer, Liam Roditty, and Paul Seiferth.
\newblock Spanners for directed transmission graphs.
\newblock {\em {SIAM} J. Comput.}, 47(4):1585--1609, 2018.

\bibitem{DBLP:journals/algorithmica/KaplanMRS20}
Haim Kaplan, Wolfgang Mulzer, Liam Roditty, and Paul Seiferth.
\newblock Reachability oracles for directed transmission graphs.
\newblock {\em Algorithmica}, 82(5):1259--1276, 2020.

\bibitem{DBLP:conf/focs/LeW21}
Hung Le and Christian Wulff{-}Nilsen.
\newblock Optimal approximate distance oracle for planar graphs.
\newblock In {\em Proc.~62nd {IEEE} Annual Symposium on Foundations of Computer
  Science ({FOCS})}, pages 363--374, 2021.

\bibitem{DBLP:journals/dcg/MatousekW04}
Jir{\'{\i}} Matousek and Uli Wagner.
\newblock New constructions of weak epsilon-nets.
\newblock {\em Discret. Comput. Geom.}, 32(2):195--206, 2004.

\bibitem{MTTV-sep-sphere-packing}
Gary~L. Miller, Shang{-}Hua Teng, William~P. Thurston, and Stephen~A. Vavasis.
\newblock Separators for sphere-packings and nearest neighbor graphs.
\newblock {\em J. {ACM}}, 44(1):1--29, 1997.

\bibitem{SW-geom-sep}
Warren~D. Smith and Nicholas~C. Wormald.
\newblock Geometric separator theorems {\&} applications.
\newblock In {\em Proc.~39th {IEEE} Annual Symposium on Foundations of Computer
  Science ({FOCS})}, pages 232--243, 1998.

\bibitem{DBLP:reference/cg/Snoeyink04}
Jack Snoeyink.
\newblock Point location.
\newblock In Jacob~E. Goodman and Joseph O'Rourke, editors, {\em Handbook of
  Discrete and Computational Geometry, Second Edition}, pages 767--785. Chapman
  and Hall/CRC, 2004.

\bibitem{DBLP:conf/soda/Wulff-Nilsen16}
Christian Wulff{-}Nilsen.
\newblock Approximate distance oracles for planar graphs with improved query
  time-space tradeoff.
\newblock In {\em Proc.~27th Annual {ACM-SIAM} Symposium on Discrete Algorithms
  ({SODA})}, pages 351--362, 2016.

\bibitem{DBLP:journals/jacm/Zwick02}
Uri Zwick.
\newblock All pairs shortest paths using bridging sets and rectangular matrix
  multiplication.
\newblock {\em J. {ACM}}, 49(3):289--317, 2002.

\end{thebibliography}

\end{document}